\providecommand{\algorithmname}{Algorithm}
\newcommand{\bbR}{\mathbb R}
\newtheorem{theorem}{Theorem}[section]
\newtheorem{lem}{Lemma}[section]
\newtheorem{rem}{Remark}[section]
\newtheorem{prop}{Proposition}[section]
\newcounter{hypA}
\newenvironment{hypA}{\refstepcounter{hypA}\begin{itemize}
  \item[({\bf A\arabic{hypA}})]}{\end{itemize}}
\newcounter{hypB}
\newcounter{hypC}
\date{}
\newcommand{\bbE}{\mathbb{E}}
\newcommand{\bbP}{\mathbb{P}}
\def\bbR{\mathbb{R}}
\begin{document}

\begin{center}

{\Large \textbf{Unbiased Estimation of the Gradient of the Log-Likelihood in Inverse Problems}}

\vspace{0.5cm}

BY AJAY JASRA$^{1}$, KODY J. H. LAW$^{2}$ \& DENG LU$^{3}$

{\footnotesize $^{1}$Computer, Electrical and Mathematical Sciences and Engineering Division, King Abdullah University of Science and Technology, Thuwal, 23955, KSA.}
{\footnotesize E-Mail:\,} \texttt{\emph{\footnotesize ajay.jasra@kaust.edu.sa}}\\
{\footnotesize $^{2}$Department of Mathematics,
University of Manchester, Manchester, M13 9PL, UK.}
{\footnotesize E-Mail:\,} \texttt{\emph{\footnotesize kodylaw@gmail.com}}\\
{\footnotesize $^{3}$Department of Statistics \& Applied Probability,
National University of Singapore, Singapore, 117546, SG.}
{\footnotesize E-Mail:\,} \texttt{\emph{\footnotesize denglu@u.nus.edu}}

\begin{abstract}
We consider the problem of estimating a parameter 
$\theta\in\Theta\subseteq\mathbb{R}^{d_{\theta}}$ associated
to a Bayesian inverse problem. 
Treating the unknown initial condition as a nuisance parameter, 
typically one must resort to a numerical approximation
of gradient of the log-likelihood and 
also adopt a discretization of the problem in space and/or time. 
We develop a new methodology to unbiasedly
estimate the gradient of the log-likelihood with respect to 
the unknown parameter, i.e.
the expectation of the estimate has no 
discretization bias. 
Such a property is not only useful for 
estimation in terms of the original stochastic model of interest, 
but can be used in stochastic gradient algorithms which
benefit from unbiased estimates. 
Under appropriate assumptions, we prove 
that our estimator is not only unbiased but of finite variance. 
In addition, 
when implemented on a single processor, we show that the cost to achieve a given level of error is comparable to multilevel Monte Carlo methods, both 
practically and theoretically. 
However, the new algorithm provides the possibility for parallel computation
on arbitrarily many processors without any loss of efficiency, asymptotically. 
In practice, this means any precision can be achieved in a fixed, 
finite constant time, provided that enough processors are available.

\vspace{5pt}
\noindent\textbf{Key Words:} Parameter Estimation; Inverse Problems; Unbiased Estimation; Stochastic Gradient.
\end{abstract}

\end{center}

\section{Introduction}


The problem of inferring unknown parameters associated to the solution of (partial) 
differential equations (PDEs) is referred to as an inverse problem. 
In such a context, when the forward problem is well-posed, 
the inverse problem is often ill-posed and challenging to solve, even numerically. 
The area has a long history and a large 
literature (see e.g.~\cite{engl,tikhonov}) 
yet the intersection with statistics is still comparatively small,
particularly considering the significant intersection, in terms of both methods and 
algorithms as well as objectives. If one adopts a Bayesian approach to solution
of the inverse problem then the object of interest is a posterior distribution, and in particular 
expectations with respect to this distribution \cite{franklin, stuart}.
While this provides an elegant solution and quantified uncertainty via well-defined target
distribution, it is more challenging to solve than its deterministic counterpart,
requiring at least a Hessian in addition to a maximum a posteriori estimator for a Laplace approximation, 
if not more expensive Monte Carlo methods. 
Here we assume solution of the Bayesian inverse problem (BIP) requires 
computationally intensive Monte Carlo methods for accurate estimation.
 We furthermore assume that the statistical model can only be defined up to some unknown parameters.
 
Consider a BIP with unknown $u\in \mathsf{X}$ 
and data $y \in \mathsf{Y}$, 
related through a PDE, and assume that the statistical model is known 
only up to some parameter $\theta \in \Theta\subseteq\mathbb{R}^{d_{\theta}}$ 
(assumed finite dimensional). 
In other words, the posterior distribution takes the form
$$
p(du, \theta | y) \propto p(y | u, \theta) p(du | \theta) p(\theta) \, .
$$
Due to sensitivity with respect to the parameter $\theta$
and strong correlation with the unknown $u$, such posterior distribution can be 
highly complex 
and very challenging to sample from, even using quite
advanced Markov chain Monte Carlo (MCMC) algorithms. 
In this article, the unknown $u$ 
is treated as a nuisance parameter and the goal is to maximize
the marginal likelihood of the parameters 
$$
p(y | \theta) = \int_\mathsf{X} p(y| u, \theta) p(du|\theta) \, .
$$ 
In such a scenario one is 
left with a finite-dimensional optimization problem, 
albeit with an objective function that is not available analytically. 
This intractability arises 
from two sources:
\begin{itemize}
\item first, for a given $(u,\theta)$ only a discretization of the likelihood 
$p(y| u, \theta)$ can be evaluated;
\item second, 
the discretized marginal likelihood 
is a high-dimensional integral which itself must be approximated. 
\end{itemize}
Moreover, the associated gradient of the log-likelihood is not available, 
which may be of interest in optimization algorithms. 
In the following we will suppress the notation for fixed observation $y$
and present the method generally. In particular, we use the notation
$\gamma_\theta(u) = p(y | u, \theta) p(u | \theta)$,
where $du$ represents the finite measure of an infinitesimal volume element, 
which may or may not be Lebesgue measure, and $p(u | \theta) = (p(du|\theta)/du)(u)$.
We will also denote its integral $Z_\theta = p(y | \theta)$, and the posterior 
by $\eta_\theta(du)$.

In this article, we present a new scheme to provide finite variance estimates of the gradient of the log-likelihood that are unbiased. 
To be precise, let $E_{\theta} = \nabla_\theta \log ( Z_\theta )$ 
denote the gradient of the log-likelihood with no discretization bias.
The proposed method provides an estimator $\hat{E}_{\theta}$
such that $\mathbb{E}[\hat{E}_{\theta}]=E_{\theta}$, 
where $\mathbb{E}$ is the expectation with respect to the randomization induced by our numerical approach. Moreover, the estimator  $\hat{E}_{\theta}$ is constructed so that one only needs access to finite resolution (discretized) approximations of the BIP. 
This scheme 
is of interest for several reasons:
\begin{enumerate}
\item{Unbiased estimates of gradients help to facilitate stochastic gradient algorithms};
\item{The method is easy to parallelize};
\item{The method helps to provide a benchmark for other computations.}
\end{enumerate}
In terms of the first point, it is often simpler to verify the validity of stochastic gradient algorithms when the estimate of the noisy functional is unbiased. 
Whilst this is not always needed (see \cite{tadic} for a special case, which does not apply in our context), it at least provides the user a peace-of-mind when implementing optimization schemes. 
The second point is of interest, in terms of efficiency of application, especially relative to competing methods. 
The third point simply states that 
one can check the precision of biased methodology. 
We now explain the approach in a little more detail.

The method that we use is based upon a technique developed in \cite{ubpf}. 
In that article the authors consider the filtering of a class of diffusion processes, which have to be discretized. 
The authors develop a method which allows one to approximate the filtering distribution, unbiasedly and without any discretization error. 
The methodology that is used
in \cite{ubpf} is a double randomization scheme based upon the approaches in \cite{mcl,rhee}. 
The work in \cite{mcl,rhee} provides a methodology to turn a sequence of convergent
estimators into an unbiased estimator, 
using judicious randomization across the level of discretization. 
It is determined for the problem of interest in \cite{ubpf} 
that an additional randomization is required in order to derive efficient estimators, that is, estimators that are competitive with the existing state-of-the-art methods in the literature. 
In this article we follow the basic approach that is used in \cite{ubpf}, 
except that one cannot use the same estimation methodology for the current problem.
An approach is introduced in \cite{beskos} which enables 
application of the related deterministic 
multilevel Monte Carlo 
identity \cite{vihola} to a sequential Monte Carlo (SMC) 
sampler \cite{delm:13, delm:04} for inference in the present context. 
In this article, we consider such a strategy
to allow the application of the approach in \cite{ubpf} to unbiasedly estimate the gradient of the log-likelihood for BIPs. 
The method of \cite{beskos} is 
one of the most efficient techniques that could be used
for estimation of the gradient of the log-likelihood for BIPs. 
However, this method is subject to discretization bias. 
In other words, suppose
$E_{\theta}^l$ is the gradient of the log-likelihood with 
a choice of discretization bias level, e.g. $2^{-l}$.
The original method would produce an estimate $\hat{E}_{\theta}^l$ for which
$\mathbb{E}[\hat{E}_{\theta}^l] \neq E_{\theta}^l$. 
On the other hand, under assumptions, it is proven that
the new method introduced here can produce an estimate 
$E_{\theta}$ with finite variance and without bias, i.e. 
$\mathbb{E}[\hat{E}_{\theta}] = E_{\theta}^\infty$. 
We also show that the cost to achieve a given variance is very similar to the multilevel SMC (MLSMC) approach of \cite{beskos}, with high probability. 
This is confirmed in numerical simulations. 
We furthermore numerically investigate 
the utility of 
our new estimator in the context of stochastic gradient algorithms,
where it is shown that a huge improvement in efficiency is possible.
Our approach is one of the first which can 
in general provide unbiased and finite variance estimators of the gradient of the log-likelihood for BIPs. 
A possible alternative would be the approach of \cite{sergios}, however, 
the methodology in that article is not as general as is presented here and may 
be more challenging to implement.

This article is structured as follows. In Section \ref{sec:problem} we explain the generic problem to which 
our approach is applicable. 
In particular, a concrete
example in the context of Bayesian inverse problems is described. In Section \ref{sec:method} we present our methodology and the proposed estimator. 
In Section \ref{sec:theory} we show that our proposed estimator is unbiased and of finite variance and we consider 
the cost to obtain the estimate. 
In Section \ref{sec:numerics} several numerical
examples are presented to investigate performance of the estimator in practice, 
including the efficiency of the estimator when used in 
in the relevant context of a stochastic gradient algorithm for parameter estimation. 
In appendix \ref{app:proofs} the proofs
of some of our theoretical results can be found.

\section{Problem Setting}\label{sec:problem}

\subsection{Generic Problem}

Let $(\mathsf{X},\mathcal{X})$ be a measurable space, 
and define a probability measure on it as
$$
\eta_{\theta}(du) = \frac{\gamma_{\theta}(u)du}{\int_{\mathsf{X}}\gamma_{\theta}(u)du}
$$
where $\theta\in\Theta\subseteq\mathbb{R}^{d_{\theta}}$, $\gamma:\Theta\times\mathsf{X}\rightarrow\mathbb{R}_+$ and $du$ is a
$\sigma-$finite measure on $(\mathsf{X},\mathcal{X})$. We are interested in computing
\begin{eqnarray*}
\nabla_{\theta}\log\Big(\int_{\mathsf{X}}\gamma_{\theta}(u)du\Big) & = & 
\int_{\mathsf{X}}\nabla_{\theta}\log\Big(\gamma_{\theta}(u)\Big) \eta_{\theta}(du) \\
& = & \int_{\mathsf{X}}\varphi_{\theta}(u) \eta_{\theta}(du) \, ,
\nonumber
\end{eqnarray*}
where we have defined 
$\varphi_{\theta}(u) = \nabla_{\theta}\log\Big(\gamma_{\theta}(u)\Big)$.
From here on, we will use the following short-hand notation for a measure $\mu$ on $(\mathsf{X},\mathcal{X})$ and a measurable $\mu-$integrable
$\varphi:\mathsf{X}\rightarrow\mathbb{R}^d$
$$\mu(\varphi):=\int_{\mathsf{X}}\varphi(x)\mu(dx) \, ,$$ which should be understood as a column vector of integrals.

In practice, we assume that we must work with an approximation of $\varphi_{\theta}(u)$ and $\eta_{\theta}(du)$. Let $l\in\mathbb{N}_0$, and set 
$$
\eta_{\theta}^l(du) = \frac{\gamma_{\theta}^l(u)du}{\int_{\mathsf{X}}\gamma_{\theta}^l(u)du}
$$
where $\gamma^l:\Theta\times\mathsf{X}\rightarrow\mathbb{R}_+$. We are now interested in computing
\begin{eqnarray*}
\nabla_{\theta}\log\Big(\int_{\mathsf{X}}\gamma_{\theta}^l(u)du\Big) & = & \int_{\mathsf{X}}\nabla_{\theta}\log\Big(\gamma_{\theta}^l(u)\Big) \eta_{\theta}^l(du) \\
& = & \int_{\mathsf{X}}\varphi_{\theta}^l(u) \eta_{\theta}^l(du).
\nonumber
\end{eqnarray*}
It is assumed explicitly that $\forall \theta\in\Theta$
$$
\lim_{l\rightarrow+\infty}\eta_{\theta}^l(\varphi_{\theta}^l) = \eta_{\theta}(\varphi_{\theta}).
$$

\subsection{Example of Problem}\label{sec:example}

We will focus on the following particular problem.
Let $D\subset\mathbb{R}^d$ with $\partial D\in C^1$ convex
and $f\in L^2(D)$. 
Consider the following PDE on $D$:

\begin{align}
-\nabla \cdot (\hat{u}\nabla p)  &=f,\quad \textrm{ on } D,       \\
p&= 0, \quad \textrm{ on } \partial D,
\nonumber 
\end{align}
where
$$
\hat{u}(x) = \bar{u}(x) + \sum_{k=1}^Ku_k\sigma_k\phi_k(x).
$$
Define $u=\{u_k\}_{k=1}^K$, with $u_k \sim U[-1,1]$ i.i.d. 
(the uniform distribution on $[-1,1]$). 
This determines the prior distribution for $u$. The state space is 
$\mathsf{X}=\prod_{k=1}^K[-1,1]$. Let $p(\cdot;u)$ denote the weak solution of $(1)$ for parameter value $u$. The following will be assumed.

\begin{hypA}
\label{hyp:N}
$f, \phi_k \in C(D)$, $\|\phi_k\|_\infty \leq 1$, 
and there is a $u_*>0$ such that 
$\bar{u}(x) > \sum_{k=1}^K\sigma_k + u_*$. 
\end{hypA}

Note that this assumption guarantees $\hat u > u_*$
uniformly in $u$, hence there is a 
well-defined (weak) solution $p(\cdot ;u)$
which will be bounded in uniformly in $u$, 
in an appropriate space, e.g. $L^2(D)$.

Define the following vector-valued function 
$$
\mathcal{G}(u) = [g_1(p(\cdot;u)),\dots,g_M(p(\cdot;u))]^{\intercal},
$$
where $g_m$ are elements of the dual space
(e.g. $L^2(D)$ is sufficient), for $m=1,\dots,M$. 
It is assumed that the data take the form 
$$
y = \mathcal{G}(u) + \xi,\quad \xi \sim N(0,\theta^{-1}\cdot\bm{I}_M),\quad \xi 
\perp u,
$$
where $N(0,\theta^{-1}\cdot\bm{I}_M)$ denotes the Gaussian random variable with mean $0$ and covariance matrix $\theta^{-1}\cdot\bm{I}_M$, and $\perp$ 
denotes independence. 
The unnormalized density of $u$ for fixed $\theta$ is then given by
\begin{equation}\label{eq:unno}
\gamma_{\theta}(u) = \theta^{M/2}\exp(-\frac{\theta}{2}\|\mathcal{G}(u) - y\|^2) \, ,
\end{equation}
the normalized density is given by
$$
\eta_{\theta}(u) = \frac{\gamma_{\theta}(u)} {Z_\theta} \, ,
$$
where $Z_\theta = {\int_{\mathsf X}\gamma_{\theta}(u)du}$, 
and the quantity of interest is defined as 
\begin{equation}\label{eq:phi}
\varphi_\theta(u) := \nabla_{\theta}\log\Big(\gamma_{\theta}(u)\Big) 
= \frac{M}{2\theta} - \frac{1}{2}\|\mathcal{G}(u) - y\|^2) \, .
\end{equation}

\subsubsection{Particular setup}\label{ssec:example}

Let $d=1$ and $D=[0,1]$ and consider $f(x)=100x$. For the prior specification of $u$, we set $K=2, \bar{u}(x)=0.15$, and for $k>0$, let $\sigma_k=(2/5)4^{-k}, \phi_k(x)=\sin(k\pi x)$ if $k$ is odd and $\phi_k(x)=\cos(k\pi x)$ if $k$ is even. The observation operator is $\mathcal{G}(u)=[p(0.25;u),p(0.75;u)]^{\intercal}$, and the parameter in observation noise covariance is taken to be $\theta=0.3$.

The PDE problem at resolution level $l$ is solved using a finite element method with piecewise linear shape functions on a uniform mesh of width $h_l=2^{-l}$, for $l\geq2$. Thus, on the $l$th level the finite-element basis functions are $\{\psi_i^l\}_{i=1}^{2^l-1}$ defined as (for $x_i = i\cdot 2^{-l}$):
$$
\psi_i^l(x) =  \left\{\begin{array}{ll}
(1/h_l)[x-(x_i-h_l)]
& \textrm{if}~x\in[x_i-h_l,x_i], \\
(1/h_l)[x_i+h_l-x] & \textrm{if}~x\in[xi,x_i+h_l].
\end{array}\right.
$$
To solve the PDE, $p^l(x)=\sum_{i=1}^{2^l-1}p_i^l\psi_i^l(x)$ is plugged into (1), and projected onto each basis element:
$$
-\Big\langle \nabla\cdot\Big(\hat{u}\nabla\sum_{i=1}^{2^l-1}p_i^l \psi_i^l \Big),\psi_j^l \Big\rangle = \langle f, \psi_j^l \rangle, 
$$
resulting in the following linear system:
$$
\bm{A}^l(u)\bm{p}^l = \bm{f}^l,
$$
where we introduce the matrix $\bm{A}^l(u)$ with entries $A_{ij}^l(u) = \langle \hat{u}\nabla\psi_i^l,\nabla\psi_j^l \rangle$, and vectors $\bm{p}^l, \bm{f}^l$ with entries $p_i^l$ and $f_i^l=\langle f, \psi_i^l\rangle$, respectively.

Define $\mathcal{G}^l(u) = [g_1(p^l(\cdot;u)),\dots,g_M(p^l(\cdot;u))]^{\intercal}$.
Denote the corresponding approximated unnormalized density by 
\begin{equation}\label{eq:unnol}
\gamma_{\theta}^l(u) = \theta^{M/2}\exp(-\frac{\theta}{2}\|\mathcal{G}^l(u) - y\|^2),
\end{equation}
and the approximated normalized density by
$$
\eta_{\theta}^l(u) = \frac{\gamma_{\theta}^l(u)} {Z_\theta^l} \, ,
$$
where $Z_\theta^l = {\int_{E}\gamma_{\theta}^l(u)du}$.
We further define 
\begin{equation}\label{eq:phil}
\varphi^l_\theta(u) := \nabla_{\theta}\log\Big(\gamma_{\theta}^l(u)\Big) 
= \frac{M}{2\theta} - \frac{1}{2}\|\mathcal{G}^l(u) - y\|^2) \, .
\end{equation}

It is well-known that under assumption (A\ref{hyp:N}) 
$p^l$ converges to $p$ as $l \rightarrow \infty$ uniformly in $u$ \cite{brenner, ciarlet}.
Furthermore, continuity ensures $\gamma_{\theta}^l(u)$ 
converges to $\gamma_{\theta}(u)$ and  
$\varphi^l_\theta(u)$ converges to $\varphi_\theta(u)$ uniformly in $u$ as well. 

\section{Methodology for Unbiased Estimation}\label{sec:method}

We now describe our methodology for computing an unbiased estimate of $\eta_{\theta}(\varphi_{\theta})$. 
For simplicity of exposition we will suppose that for $i\in\{1,\dots,d_{\theta}\}$, $(\varphi_\theta(u))_i\in\mathcal{B}_b(\mathsf{X})$, where
$(x)_i$ denotes the $i^{th}-$element of a vector and $\mathcal{B}_b(\mathsf{X})$ are the collection of bounded, measurable and real-valued functions on $\mathsf{X}$.
This constraint is not needed for the numerical implementation of the method, but, shall reduce most of the technical exposition to follow.
As remarked in the introduction, the basic approach
follows that in \cite{ubpf} with some notable differences. We now detail how the approach will work.

\subsection{Methodology in \cite{ubpf}}

The underlying approach of \cite{ubpf} is a type of double randomization scheme. The first step is to use the single-term estimator
as developed in \cite{rhee}. Suppose one wants to estimate $\eta_{\theta}(\varphi_{\theta})$, but, only has access to a methodology
that can approximate $\eta_{\theta}^l(\varphi_{\theta}^l)$ for each fixed $l\in\mathbb{N}_0$. Let $\mathbb{P}_L(l)$ be a positive probability
mass function on $\mathbb{N}_0$ and suppose that one can construct a sequence of random variables $(\Xi_{\theta}^l)_{l\geq 0}$ 
such that 
\begin{eqnarray}
\mathbb{E}[\Xi_{\theta}^0] & = & \eta_{\theta}^0(\varphi_{\theta}^0) \label{eq:ub1}\\
\mathbb{E}[\Xi_{\theta}^l] & = & \eta_{\theta}^l(\varphi_{\theta}^l) - \eta_{\theta}^{l-1}(\varphi_{\theta}^{l-1})\quad\quad l\in\mathbb{N}\label{eq:ub2}
\end{eqnarray}
and that
\begin{equation}\label{eq:ub3}
\sum_{l\in\mathbb{N}_0}\frac{1}{\mathbb{P}_L(l)}\mathbb{E}[\|\Xi_{\theta}^l\|^2] < +\infty
\end{equation}
where $\|\cdot\|$ is the $L_2-$norm. 
Now if one draws $L\sim\mathbb{P}_L(\cdot)$, then 
$\Xi_{\theta}^L/\mathbb{P}_{L}(L)$ is an unbiased and finite variance estimator of $\eta_{\theta}(\varphi_{\theta})$. It should be noted that \eqref{eq:ub1}-\eqref{eq:ub2}
are not necessary conditions, but are sufficient to ensure the unbiasedness of the estimator. 

In the context of interest, it can be challenging to obtain a sequence of random variables which can possess the properties \eqref{eq:ub1}-\eqref{eq:ub3}. We will
detail one possible approach at a high-level and then explain in details how one can actually construct a simulation method to achieve this high-level description.

\subsection{High Level Approach}
\label{ssec:high}

The objective of this section is to highlight the generic procedure that is used in \cite{ubpf} for producing estimates that 
satisfy \eqref{eq:ub1}-\eqref{eq:ub2}.
The basic idea is to use another application of randomization 
to construct such unbiased estimators from a consistent sequence of estimators.
In particular, consider a given increasing sequence $(N_p)_{p\in\mathbb{N}_0}$ with $N_p\in\mathbb{N}$ for each $p\in\mathbb{N}_0$, $1\leq N_0< N_1< \cdots$ and
$\lim_{p\rightarrow\infty}N_p=\infty$. Then, we suppose that one can construct 
$N_p$-sample Monte Carlo (type) estimators 
$\xi_\theta^{l,p}$ for $l\in\mathbb{N}_0$,
such that almost surely the following consistency results hold
\begin{eqnarray}\label{eq:cons1}
\lim_{p\rightarrow\infty}\xi_\theta^{0,p}
& = & \eta_{\theta}^{0}(\varphi_{\theta}^0) \, ,\\
\lim_{p\rightarrow\infty} \xi_\theta^{l,p}
& = &\eta_{\theta}^{l}(\varphi_{\theta}^l)-\eta_{\theta}^{l-1}(\varphi_{\theta}^{l-1})
\, , \quad\quad l\in\mathbb{N} \, .
\label{eq:cons2}
\end{eqnarray}
For a given $(l,p,p')\in\mathbb{N}_0^3$, $p\neq p'$ we do \emph{not} require 
$\xi_\theta^{l,p}$ and $\xi_\theta^{l,p'}$
to be independent, nor do we require unbiasedness of the individual estimators as in
\begin{eqnarray*}
\mathbb{E}[\xi_\theta^{0,p}]
& = & \eta_{\theta}^{0}(\varphi_{\theta}^0) \, , \\
\mathbb{E}[\xi_\theta^{l,p}]
& = &
\eta_{\theta}^{l}(\varphi_{\theta}^l)-\eta_{\theta}^{l-1}(\varphi_{\theta}^{l-1}) \, , \quad\quad l\in\mathbb{N} \, .
\end{eqnarray*}
Now set 
\begin{eqnarray*}
\Xi_{\theta}^{0,0} & := & \xi_\theta^{0,0} \, ,\\
\Xi_{\theta}^{0,p} & := &  \xi_\theta^{0,p} -  \xi_\theta^{0,p-1} \, , 
\quad\quad p\in\mathbb{N} \, .
\end{eqnarray*}
For $l\in\mathbb{N}$ given, set
\begin{eqnarray*}
\Xi_{\theta}^{l,0} & := & 
\xi_\theta^{l,0} \, , \\
\Xi_{\theta}^{l,p} & := & \xi_\theta^{l,p} -  \xi_\theta^{l,p-1} \, , \quad\quad p\in\mathbb{N} \, .
\end{eqnarray*}
Let $\mathbb{P}_P(p)$, $p\in\mathbb{N}_0$, be a positive probability mass function with $\overline{\mathbb{P}}_P(p)=\sum_{q=p}^{\infty}\mathbb{P}_P(q)$.
Now if
\begin{eqnarray}\label{eq:cs_fv}
\sum_{p\in\mathbb{N}_0}\frac{1}{\overline{\mathbb{P}}_P(p)}
\mathbb{E}[\| \xi_\theta^{l,p}
-\eta_{\theta}^0(\varphi_{\theta}^0)\|^2]
& < & +\infty \label{eq:cs_fv1} \, , \\
\sum_{p\in\mathbb{N}_0}\frac{1}{\overline{\mathbb{P}}_P(p)}\mathbb{E}[\|
 \xi_\theta^{l,p} 
- \{\eta_{\theta}^{l}(\varphi_{\theta}^l) - \eta_{\theta}^{l-1}(\varphi_{\theta}^{l-1})\}
\|^2]
& < & +\infty \, , \quad\quad l\in\mathbb{N}\label{eq:cs_fv2}
\end{eqnarray}
and $P\sim\mathbb{P}_P(\cdot)$, then 
\begin{equation}\label{eq:xi_l_def}
\Xi_{\theta}^l = \sum_{p=0}^P \frac{1}{\overline{\mathbb{P}}_P(p)} \Xi_{\theta}^{l,p}
\end{equation}
will allow $(\Xi_{\theta}^l)_{l\in\mathbb{N}_0}$ to satisfy \eqref{eq:ub1}-\eqref{eq:ub2},
where expectations are understood to be with respect to $\mathbb{P}_P$ 
yet $P$ is suppressed in the notation. 
Moreover $(\Xi_{\theta}^l)_{l\in\mathbb{N}_0}$ will have finite variances.
This result follows as we are simply using the coupled sum estimator as in \cite{rhee} 
and using \cite[Theorem 5]{vihola}, for instance, to verify the
conditions required. 

\subsection{Details of the Approach}

We will now describe how to obtain the sequence $(\Xi_{\theta}^{l,p})_{p\in\mathbb{N}_0}$ for $l\in\mathbb{N}_0$ fixed.

\subsubsection{MLSMC Method of \cite{beskos}}

To introduce our approach, we first consider the MLSMC method in \cite{beskos} which will form the basis for our estimation procedure.
Define for $l\in\mathbb{N}_0$
$$
G_{\theta}^l(u) = \frac{\gamma_{\theta}^{l+1}(u)}{\gamma_{\theta}^{l}(u)}
$$
and for $l\in\mathbb{N}$, $M_{\theta}^l$ is a $\eta_{\theta}^l-$invariant Markov kernel; that is, for any $\varphi\in\mathcal{B}_b(\mathsf{X})$
\begin{equation}\label{eq:m}
\eta_{\theta}^l(\varphi) = \int_{\mathsf{X}}\Big(\int_{\mathsf{X}}\varphi(u') M_{\theta}^l(u,du')\Big)\eta_{\theta}^l(du).
\end{equation}
Define for $\mu\in\mathcal{P}(\mathsf{X})$ (the collection of probability measures on $(\mathsf{X},\mathcal{X})$), $l\in\mathbb{N}$
\begin{equation}\label{eq:Phi}
\Phi_{\theta}^l(\mu)(du') := \frac{1}{\mu(G_{\theta}^{l-1})} \int_{\mathsf{X}}G_{\theta}^{l-1}(u)M_{\theta}^l(u,du') \mu(du) \, .
\end{equation}
Noting that 
\begin{equation}\label{eq:rat}
\eta_{\theta}^l(\varphi) = 
\frac{\eta_{\theta}^{l-1}(G_{\theta}^{l-1}\varphi)}{\eta_{\theta}^{l-1}(G_{\theta}^{l-1})} 
= \frac{Z_\theta^{l-1}}{Z_\theta^l}\eta_{\theta}^{l-1}(G_{\theta}^{l-1}\varphi) 
= \frac{1}{Z_\theta^l} \int_{\mathsf{X}} ( \gamma_\theta^l(u) \varphi(u) ) du
\, ,
\end{equation}
equations \eqref{eq:m} and \eqref{eq:Phi} lead to the recursion
\begin{eqnarray*}
\eta_{\theta}^l(\varphi) = 
\frac{\eta_{\theta}^{l-1}(G_{\theta}^{l-1}\varphi)}{\eta_{\theta}^{l-1}(G_{\theta}^{l-1})}
&=& \frac{1}{\eta_{\theta}^{l-1}(G_{\theta}^{l-1})} 
\int_{\mathsf{X}} G_{\theta}^{l-1}(u) 
\Big ( \int_{\mathsf{X}}\varphi(u') M_{\theta}^l(u,du') \Big)
 \eta_{\theta}^{l-1}(du)  \\
&=& \Phi_{\theta}^l(\eta_{\theta}^{l-1})(\varphi) \, .
\end{eqnarray*}
Consider $N\in\mathbb{N}$,
and slightly modify the MLSMC algorithm used in \cite{beskos} 
to keep the number of samples across layers fixed,
up to some given level $l\in\mathbb{N}$. 
Details are given in Algorithm \ref{alg:mlsmc}. 
\begin{algorithm}[h!]
\begin{enumerate}
\item{Initialization: For $i\in\{1,\dots,N\}$ sample $U_0^{i}$ from $\eta_\theta^0$. If $l=0$ stop; otherwise set $s=1$ and go-to step 2.}
\item{Resampling and Sampling: For $i\in\{1,\dots,N\}$ sample $U_s^{i}$ from $\Phi_{\theta}^s(\eta_{\theta}^{s-1,N})$. This consists of sampling $a_s^i\in\{1,\dots,N\}$ with probability
mass function
$$
\mathsf{P}_\theta^N(a_s^i=j) = \frac{G_{\theta}^{s-1}(u_{s-1}^j)}{\sum_{k=1}^N G_{\theta}^{s-1}(u_{s-1}^k)} \, ,
$$
and then sampling $U_s^i$ from $M_{\theta}^s(u_{s-1}^{a_s^i},\cdot)$. If $s=l$ stop; otherwise set $s=s+1$ and return to the start of 2.}
\end{enumerate}
\caption{A Multilevel Sequential Monte Carlo Sampler with a fixed number of samples $N\in\mathbb{N}$ and a given level $l\in\mathbb{N}_0$.}
\label{alg:mlsmc}
\end{algorithm}

This algorithm yields samples distributed according to the following joint law
\begin{equation}\label{eq:mlsmc_law}
\mathsf{P}_\theta^N\big(d(u_0^{1:N},\dots,u_l^{1:N})\big) = \Big(\prod_{i=1}^{N} \eta_{\theta}^{0}(du_0^i)\Big) 
\Big(\prod_{s=1}^l \prod_{i=1}^{N} \Phi_{\theta}^s(\eta_{\theta}^{s-1,N})(du_s^i)
\Big) \, ,
\end{equation}
where $\eta_{\theta}^{s-1,N}(du)=\frac{1}{N}\sum_{i=1}^{N}\delta_{u_{s-1}^i}(du)$ 
for $s\in\mathbb{N}$. 
One can compute an estimate of $\eta_{\theta}^0(\varphi_{\theta}^0)$ as
$$
\eta_{\theta}^{0,N}(\varphi_{\theta}^0) := 
\frac{1}{N}\sum_{i=1}^N \varphi_{\theta}^0(u_0^i) \, .
$$
Following from \eqref{eq:rat}, for $l\in\mathbb{N}$, 
one can estimate 
$\eta_{\theta}^l(\varphi_{\theta}^l)-\eta_{\theta}^{l-1}(\varphi_{\theta}^{l-1})$ with
$$
\frac{\eta_{\theta}^{l-1,N}(G_{\theta}^{l-1}\varphi_{\theta}^l)}{\eta_{\theta}^{l-1,N}(G_{\theta}^{l-1})} - \eta_{\theta}^{l-1,N}(\varphi_{\theta}^{l-1}) = 
\frac{\frac{1}{N}\sum_{i=1}^N G_{\theta}^{l-1}(u_{l-1}^i)\varphi_{\theta}^{l}(u_{l-1}^i)}{\frac{1}{N}\sum_{i=1}^N G_{\theta}^{l-1}(u_{l-1}^i)} - \frac{1}{N}\sum_{i=1}^N \varphi_{\theta}^{l-1}(u_{l-1}^i) \, .
$$
The reason for using the samples generated at level $l-1$ to estimate 
$\eta_{\theta}^l(\varphi_{\theta}^l)$ as well as 
$\eta_{\theta}^{l-1}(\varphi_{\theta}^{l-1})$ is to 
construct estimators which 
satisfying conditions such as \eqref{eq:ub3}. 
Standard results (for instance in \cite{delm:13}) allow one to prove that almost surely
\begin{eqnarray*}
\lim_{N\rightarrow\infty} \eta_{\theta}^{0,N}(\varphi_{\theta}^0) & = & \eta_{\theta}^{0}(\varphi_{\theta}^0) \\
\lim_{N\rightarrow\infty} \Big(\frac{\eta_{\theta}^{l-1,N}(G_{\theta}^{l-1}\varphi_{\theta}^l)}{\eta_{\theta}^{l-1,N}(G_{\theta}^{l-1})} - \eta_{\theta}^{l-1,N}(\varphi_{\theta}^{l-1}) \Big)& = & 
\eta_{\theta}^l(\varphi_{\theta}^l)-\eta_{\theta}^{l-1}(\varphi_{\theta}^{l-1}) \, ,\quad\quad l\in\mathbb{N} \, .
\end{eqnarray*}
Note that in general one has 
$$
\mathsf{E}_\theta^N\bigg[\Big(\frac{\eta_{\theta}^{l-1,N}(G_{\theta}^{l-1}\varphi_{\theta}^l)}{\eta_{\theta}^{l-1,N}(G_{\theta}^{l-1})} - \eta_{\theta}^{l-1,N}(\varphi_{\theta}^{l-1}) \Big)\bigg] \neq 
\eta_{\theta}^l(\varphi_{\theta}^l)-\eta_{\theta}^{l-1}(\varphi_{\theta}^{l-1}) \, ,\quad\quad l\in\mathbb{N} \, ,
$$ 
where 
$\mathsf{E}_\theta^N$
is an expectation associated to the probability in \eqref{eq:mlsmc_law}.

\subsubsection{Approach for Constructing $(\Xi_{\theta}^{l,p})_{p\in\mathbb{N}_0}$}

In order to calculate our approximation, we will consider the following approach, 
which was also used in \cite{ubpf}. Given any
$(l,P)\in\mathbb{N}_0^2$, we will run Algorithm \ref{alg:est_const} 
in order to obtain $(\Xi_{\theta}^{l,p})_{p\in\{0,1,\dots,P\}}$.
\begin{algorithm}[h!]
\begin{enumerate}
\item Sample: 
Run Algorithm \ref{alg:mlsmc} 
{\em independently} with $N_p-N_{p-1}$ samples
for $p\in\{0,1,\dots,P\}$, up-to level $(l-1)\vee 0$, 
where we define for convenience $N_{-1}:=0$.  
\item Estimate: construct $\Xi_{\theta}^{l,p}$ as in equation \eqref{eq:xi_l_p_def},
for $p\in\{0,1,\dots,P\}$.
\end{enumerate}
\caption{Approach to construct $(\Xi_{\theta}^{l,p})_{p\in\{0,1,\dots,P\}}$ 
for $(l,P)\in\mathbb{N}_0^2$ given.}
\label{alg:est_const}
\end{algorithm}

The joint probability law of the samples simulated 
according to Algorithm \ref{alg:est_const} is 
\begin{equation}\label{eq:alg_2_law}
\mathbb{P}_{\theta}\big(d(u_0^{1:N_p},\dots,u_{(l-1)\vee 0}^{1:N_p})\big) = 
\prod_{p=0}^P \mathsf{P}_\theta^{N_p-N_{p-1}}
\big((u_0^{N_{p-1}+1:N_p},\dots,u_{(l-1)\vee 0}^{N_{p-1}+1:N_p})\big) \, ,
\end{equation}
where $N_{-1}=0$ and $\mathsf{P}_\theta^{N_p-N_{p-1}}$ is as 
defined in \eqref{eq:mlsmc_law}. For 
$(l,P)\in\mathbb{N}_0^2$ given, 
consider running Algorithm \ref{alg:est_const}. 
Then for any $s\in\{0,1,\dots,(l-1)\vee 0\}$ 
and any $p \in \{0,\dots, P\}$ we can construct the following 
empirical 
probability measure on $(\mathsf{X},\mathcal{X})$ 
\begin{equation}\label{eq:es}
\eta_\theta^{s,N_{0:p}}(du_s) := \sum_{q=0}^p \Big(\frac{N_q-N_{q-1}}{N_p}\Big)\eta_\theta^{s,N_q-N_{q-1}}(du_s) \, .
\end{equation}
Note the recursion
$$
\eta_\theta^{s,N_{0:p}}(du_s) = 
\Big(\frac{N_p-N_{p-1}}{N_p}\Big)\eta_\theta^{s,N_p-N_{p-1}}(du_s) 
+ \frac{N_{p-1}}{N_p}\eta_\theta^{s,N_{0:p-1}}(du_s)
\, .
$$
Now define 
\begin{equation}\label{eq:xi_l_p_def}
\Xi_{\theta}^{l,p} := \left\{\begin{array}{ll} 
\eta_{\theta}^{0,N_{0:p}}(\varphi_{\theta}^0) - \eta_{\theta}^{0,N_{0:p-1}}(\varphi_{\theta}^0)
 & \textrm{if}~l=0 \\
\frac{\eta_{\theta}^{l-1,N_{0:p}}(G_{\theta}^{l-1}\varphi_{\theta}^l)}{\eta_{\theta}^{l-1,N_{0:p}}(G_{\theta}^{l-1})} - \eta_{\theta}^{l-1,N_{0:p}}(\varphi_{\theta}^{l-1})
-
\Big(\frac{\eta_{\theta}^{l-1,N_{0:p-1}}(G_{\theta}^{l-1}\varphi_{\theta}^l)}{\eta_{\theta}^{l-1,N_{0:p-1}}(G_{\theta}^{l-1})} - \eta_{\theta}^{l-1,N_{0:p-1}}(\varphi_{\theta}^{l-1})\Big)
& \textrm{otherwise} \, ,
\end{array}\right .
\end{equation}
where $\eta_{\theta}^{0,N_{0:-1}}(\varphi_{\theta}^0):=0$, and 
$$
\frac{\eta_{\theta}^{l-1,N_{0:-1}}(G_{\theta}^{l-1}\varphi_{\theta}^l)}{\eta_{\theta}^{l-1,N_{0:-1}}(G_{\theta}^{l-1})} - \eta_{\theta}^{l-1,N_{0:-1}}(\varphi_{\theta}^{l-1}) := 0 \, .
$$

For convenience in the next section, the conditions \eqref{eq:cs_fv1}-\eqref{eq:cs_fv2} translated to the notations used in this section are
\begin{eqnarray}
\sum_{p\in\mathbb{N}_0}\frac{1}{\overline{\mathbb{P}}_P(p)}\mathbb{E}_{\theta}[\|[\eta_\theta^{0,N_{0:p}}-\eta_\theta^{0}](\varphi_\theta^{0})\|^2] & < &+\infty 
\label{eq:cs_fv3}\\
\sum_{p\in\mathbb{N}_0}\frac{1}{\overline{\mathbb{P}}_P(p)} 
\mathbb{E}_{\theta}\Big[\Big\|\frac{ \eta_\theta^{l-1,N_{0:p}}(G_{\theta}^{l-1}\varphi_\theta^{l}) }
{\eta_\theta^{l-1,N_{0:p}}(G_{\theta}^{l-1})}- \eta_\theta^{l-1,N_{0:p}}(\varphi_\theta^{l-1}) -
\Big(
\frac{ \eta_\theta^{l-1}(G_{\theta}^{l-1}\varphi_\theta^{l}) }
{\eta_\theta^{l-1}(G_{\theta}^{l-1})}- \eta_\theta^{l-1}(\varphi_\theta^{l-1})
\Big)
\Big\|^2\Big]
& < &+\infty , \,\, l\in\mathbb{N} \, , \label{eq:cs_fv4}
\end{eqnarray}
where $\mathbb{E}_{\theta}$ 
is used to denote expectation associated to the probability $\mathbb{P}_{\theta}$ in \eqref{eq:alg_2_law}.

\subsection{Method}

The new method is now 
presented in Algorithm \ref{alg:method}. 
\begin{algorithm}[h!]
\vspace{5pt}
{For $i=1,\dots, M$:}
\begin{enumerate}
\item{Generate $L_i\sim\mathbb{P}_L$ and $P_i\sim\mathbb{P}_P$.}
\item{Run Algorithm \ref{alg:est_const} with $l=L_i$ and $P=P_i$.}
\item{Compute:
$$
\Xi_{\theta}^{L_i} = \sum_{p=0}^{P_i} \frac{1}{\overline{\mathbb{P}}_P(p)}\Xi_{\theta}^{L_i,p}
\, ,
$$
where $\Xi_{\theta}^{L_i,p}$ is given in \eqref{eq:xi_l_p_def}.} 
\end{enumerate}
{Return the estimate:
\begin{equation}\label{eq:ub_est}
\widehat{\eta_{\theta}(\varphi_{\theta})} := \frac{1}{M}\sum_{i=1}^M \frac{1}{\mathbb{P}_L(L_i)}\Xi_{\theta}^{L^i} \, .
\end{equation}
}
\caption{Method for Unbiasedly Estimating $\eta_{\theta}(\varphi_{\theta})$.}
\label{alg:method}
\end{algorithm}

The estimate of
$\eta_{\theta}(\varphi_{\theta})$ is given by \eqref{eq:ub_est}. 
In Section \ref{sec:theory} we will prove that it is both unbiased and of finite variance,
as well as to investigate the cost of computing the estimate.

There are several points of practical interest to be made at this stage 
(the first two were noted already in \cite{ubpf}). 
First, the loop over the number of
independent samples $i$ in Algorithm \ref{alg:method} can be easily parallelized. 
Second, one does not need to make $L$ and $P$ independent;
this is only assumed for simplicity of presentation, but is not required. 
Third, the current method uses only the level $l-1$
marginal of \eqref{eq:alg_2_law}.
All the samples for $s=0,\dots, l-2$ 
and associated empirical measures \eqref{eq:es} are discarded
and only the level $l-1$ empirical measure is utilized. 
This differs from \cite{beskos} where 
all the lower level empirical measures are used. 
It is possible these samples could be utilized to improve the accuracy of the method,
but it is not necessary and so is not investigated further here.
The potential efficiency of the double randomization scheme, as well as a
discussion of the overall efficiency of the approach, 
is given in \cite[Section 2.5]{ubpf}.

\section{Theoretical Results}\label{sec:theory}

Our main objective is to show that $(\Xi_\theta^l)_{l\in\mathbb{N}_0}$ as defined in \eqref{eq:xi_l_def} with $(\Xi_\theta^{l,p})_{p\in\mathbb{N}_0}$
as in \eqref{eq:xi_l_p_def} will satisfy \eqref{eq:ub1}-\eqref{eq:ub3}. To that end, one must first show that $(\Xi_\theta^{l,p})_{p\in\mathbb{N}_0}$ 
satisfy \eqref{eq:cs_fv3}-\eqref{eq:cs_fv4} which certainly verifies \eqref{eq:ub1}-\eqref{eq:ub2} and then one must establish that \eqref{eq:ub3} holds.
We make the following assumptions.

\begin{hypA}
\label{hyp:A}
For each $\theta\in\Theta$, there exist $0<\underline{C}<\overline{C}<+\infty$ such that
\begin{eqnarray*}
\sup_{l\geq 0} \sup_{u\in \mathsf{X}} G_{\theta}^l (u) & \leq & \overline{C}\\
\inf_{l\geq 0} \inf_{u\in \mathsf{X}} G_{\theta}^l (u) & \geq & \underline{C}.
\end{eqnarray*}
\end{hypA}

\begin{hypA}
\label{hyp:B}
For each $\theta\in\Theta$, , there exist a $\rho\in(0,1)$ such that for any $l\geq 1$, $(u,v)\in \mathsf{X}^2$, $A\in\mathcal{X}$
$$
\int_A M_{\theta}^l(u,du') \geq \rho \int_A M_{\theta}^l(v,dv').
$$
\end{hypA}

\begin{hypA}
\label{hyp:C}
For each $\theta\in\Theta$, there exists a $\widetilde{C}<+\infty$ such that for each $i\in\{1,\dots,d_{\theta}\}$
$$
\sup_{l\geq 0} \sup_{u\in \mathsf{X}} |(\varphi_{\theta}^l(u))_i| \leq \widetilde{C}.
$$
\end{hypA}

For $\varphi\in\mathcal{B}_b(\mathsf{X})$ we set $\|\varphi\|_{\infty}=\sup_{u\in\mathsf{X}}|\varphi(u)|$.
To simplify our notations we will set $Z_\theta^l=\int_{\mathsf{X}}\gamma_{\theta}^l(u)du$, $l\in\mathbb{N}_0$, and for $l\in\mathbb{N}$
$$
\|{\varphi}_\theta^{l}-{\varphi}_\theta^{l-1}\|_{\infty}^2 = \max_{i\in\{1,\dots,d_{\theta}\}}\Big\{\|(\varphi_\theta^{l})_i-(\varphi_\theta^{l-1})_i\|_{\infty}^2\Big\}.
$$
We begin with the following result, which is associated to verifying that \eqref{eq:cs_fv3}-\eqref{eq:cs_fv4} can hold.

\begin{prop}\label{prop:main_prop}
Assume (A\ref{hyp:A}-\ref{hyp:C}). Then for any $\theta\in\Theta$ there exists a $C<+\infty$ such that 
for any $p\in\mathbb{N}_0$, $1\leq N_0<N_1<\cdots<N_p<+\infty$:
$$
\mathbb{E}_{\theta}[\|[\eta_\theta^{0,N_{0:p}}-\eta_\theta^{0}](\varphi_\theta^{0})\|^2] \leq \frac{C}{N_p}\Big(1+\frac{p^2}{N_p}\Big).
$$
In addition, for any  $(l,p)\in\mathbb{N}\times\mathbb{N}_0$, $1\leq N_0<N_1<\cdots<N_p<+\infty$:
$$
\mathbb{E}_{\theta}\Bigg[\Bigg\|\frac{ \eta_\theta^{l-1,N_{0:p}}(G_{\theta}^{l-1}\varphi_\theta^{l}) }
{\eta_\theta^{l-1,N_{0:p}}(G_{\theta}^{l-1})}- \eta_\theta^{l-1,N_{0:p}}(\varphi_\theta^{l-1}) -
\Big(
\frac{ \eta_\theta^{l-1}(G_{\theta}^{l-1}\varphi_\theta^{l}) }
{\eta_\theta^{l-1}(G_{\theta}^{l-1})}- \eta_\theta^{l-1}(\varphi_\theta^{l-1})
\Big)
\Bigg\|^2\Bigg] \leq 
$$
$$
\frac{C}{N_p}\Big(1+\frac{p^2}{N_p}\Big)\Big(\|{\varphi}_\theta^{l}-{\varphi}_\theta^{l-1}\|_{\infty}^2+\Big\|G_{\theta}^{l-1}\frac{Z_\theta^{l-1}}{Z_\theta^{l}}-1\Big\|_{\infty}^2\Big).
$$
\end{prop}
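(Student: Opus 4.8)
The plan is to reduce every quantity to an $L^2$ estimate (and, for the nonlinear error terms, an $L^4$ estimate) for a \emph{linear} functional of the pooled empirical measure $\eta_\theta^{s,N_{0:p}}$, and then to arrange the increment so that the oscillation of the relevant test function carries the two difference factors $\|\varphi_\theta^l-\varphi_\theta^{l-1}\|_\infty$ and $\|G_\theta^{l-1}Z_\theta^{l-1}/Z_\theta^l-1\|_\infty$. Writing $w_q=(N_q-N_{q-1})/N_p$, the measure $\eta_\theta^{s,N_{0:p}}=\sum_{q=0}^p w_q\,\eta_\theta^{s,N_q-N_{q-1}}$ is a convex combination ($\sum_q w_q=1$) of the $p+1$ \emph{independent} runs of Algorithm \ref{alg:mlsmc} generated in Algorithm \ref{alg:est_const}. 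For fixed $\phi\in\mathcal{B}_b(\mathsf{X})$ I would split $[\eta_\theta^{s,N_{0:p}}-\eta_\theta^s](\phi)$ into bias and fluctuation: standard SMC results (e.g.\ \cite{delm:13}), which under (A\ref{hyp:A})--(A\ref{hyp:B}) hold with constants uniform in the level $s$, give each run a bias $\le C\|\phi\|_\infty/(N_q-N_{q-1})$ and a variance $\le C\|\phi\|_\infty^2/(N_q-N_{q-1})$. The weighted biases sum to $\le C(p+1)\|\phi\|_\infty/N_p$, while independence yields variance $\sum_q w_q^2\,C\|\phi\|_\infty^2/(N_q-N_{q-1})\le C\|\phi\|_\infty^2/N_p$; adding squared bias and variance gives $\mathbb{E}_{\theta}[|[\eta_\theta^{s,N_{0:p}}-\eta_\theta^s](\phi)|^2]\le (C\|\phi\|_\infty^2/N_p)(1+p^2/N_p)$ (working componentwise, the Euclidean norm costing only a factor $d_\theta$). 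This already settles the first display, since for $l=0$ the samples are i.i.d.\ draws from $\eta_\theta^0$, so there is no bias and $\phi=\varphi_\theta^0$ is bounded by (A\ref{hyp:C}).

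For the increment $l\ge 1$ abbreviate $\mu=\eta_\theta^{l-1,N_{0:p}}$, $\pi=\eta_\theta^{l-1}$, $G=G_\theta^{l-1}$ and $r=\eta_\theta^l(\varphi_\theta^l)$; recall $\pi(G)=Z_\theta^l/Z_\theta^{l-1}$ and put $g=G/\pi(G)=G_\theta^{l-1}Z_\theta^{l-1}/Z_\theta^l$, so $\pi(g)=1$ and $g-1$ is the second difference factor. Using $\pi(G(\varphi_\theta^l-r))=0$ and $\mu(G)\ge\underline{C}$ from (A\ref{hyp:A}), I would rewrite the estimator error as
$$
\Big(\tfrac{\mu(G\varphi_\theta^l)}{\mu(G)}-\mu(\varphi_\theta^{l-1})\Big)-\Big(\tfrac{\pi(G\varphi_\theta^l)}{\pi(G)}-\pi(\varphi_\theta^{l-1})\Big)=[\mu-\pi](h)-\tfrac{[\mu-\pi](G)}{\mu(G)}\,[\mu-\pi](\tilde\varphi),
$$
with $\tilde\varphi=g(\varphi_\theta^l-r)$ and, after discarding constants (annihilated by $\mu-\pi$), the leading test function $h=g(\varphi_\theta^l-\varphi_\theta^{l-1})+(g-1)(\varphi_\theta^{l-1}-r)$. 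Since $\|g\|_\infty\le\overline{C}/\underline{C}$ and $\|\varphi_\theta^{l-1}-r\|_\infty\le 2\widetilde{C}$ by (A\ref{hyp:A}) and (A\ref{hyp:C}), one gets $\|h\|_\infty^2\le C(\|\varphi_\theta^l-\varphi_\theta^{l-1}\|_\infty^2+\|g-1\|_\infty^2)$, so the first-paragraph bound applied to $[\mu-\pi](h)$ reproduces exactly the claimed right-hand side for the leading term.

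For the second-order remainder I would again use $\mu(G)\ge\underline{C}$ together with the identity $G-\pi(G)=\pi(G)(g-1)$, which shows the oscillation of $G$ is itself of order $\|g-1\|_\infty$. Hence $[\mu-\pi](G)=\pi(G)[\mu-\pi](g)$ and, by the $L^4$ analogue of the first-paragraph argument, $\mathbb{E}_{\theta}[|[\mu-\pi](G)|^4]\le C\|g-1\|_\infty^4/N_p^2$ up to factors polynomial in $p/N_p$, while $\|\tilde\varphi\|_\infty\le 2\widetilde{C}\,\overline{C}/\underline{C}$ gives $\mathbb{E}_{\theta}[|[\mu-\pi](\tilde\varphi)|^4]\le C/N_p^2$ up to such factors. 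Cauchy--Schwarz then bounds the second moment of the remainder by $C\|g-1\|_\infty^2/N_p^2$ times a polynomial in $p/N_p$, which, since $N_p\ge 1$, is dominated by $(C/N_p)(1+p^2/N_p)\|g-1\|_\infty^2$. Summing the leading and remainder contributions yields the second display.

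The main obstacle I anticipate is precisely this remainder step: showing that the \emph{nonlinear} error term carries the difference factor rather than being merely higher order in $N_p$. This is where the identity $G-\pi(G)=\pi(G)(g-1)$ is indispensable, and it forces genuine $L^4$ (not just $L^2$) control of the pooled measure with the correct joint dependence on the oscillation of the test function and on $p/N_p$. Propagating the level-uniformity of all SMC constants---which ultimately rests on the minorization (A\ref{hyp:B})---through these higher-moment estimates is the most delicate part of the bookkeeping.
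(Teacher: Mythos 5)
Your proposal is correct in substance, and for the second display it takes a genuinely different route from the paper. Your first paragraph is essentially the paper's own argument: the pooled bound is Lemma \ref{lem:est_pool}, which is proved by exactly the bias/variance split over the $p+1$ independent runs of Algorithm \ref{alg:mlsmc}, using the per-run bounds of Lemma \ref{lem:ml_bounds}; so the first display matches the paper. For the second display, however, the paper (Lemma \ref{lem:tech_lem2}) works componentwise and decomposes the error into $T_1+T_2+T_3$, with $T_1$ a ratio-difference carrying the test function $(\varphi_\theta^{l})_i-(\varphi_\theta^{l-1})_i$ (handled by a separate Lemma \ref{lem:tech_lem1}), $T_3$ a linear term with test function $(\varphi_\theta^{l-1})_i\big(G_\theta^{l-1}Z_\theta^{l-1}/Z_\theta^{l}-1\big)$, and $T_2$ split further into $T_4+T_5$. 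Your single identity, which collapses all of this into one linear term $[\mu-\pi](h)$ (with $h$ carrying both difference factors) plus one quadratic remainder, is algebraically correct --- $h$ differs from $g(\varphi_\theta^l-r)-\varphi_\theta^{l-1}$ only by the constant $r$, which $\mu-\pi$ annihilates, and $[\mu-\pi](G)/\mu(G)=[\mu-\pi](g)/\mu(g)$ --- and it is arguably tidier than the paper's cascade, since the leading term needs only the pooled $L^2$ estimate applied to a single test function.

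The one place where you fall short of a complete proof is precisely the step you flag: the $L^4$ analogue of the pooled estimate is proved neither in the paper nor by you, and with the per-run SMC bias in play it is genuinely additional work (Rosenthal-type bounds, uniform in the level). But it is also unnecessary, and this is the key difference with the paper's treatment: in the analogous cross term $T_4$, the paper bounds one factor \emph{deterministically} in sup-norm and only the other factor in $L^2$. You can do the same inside your own decomposition. Since $[\mu-\pi](G)=\pi(G)[\mu-\pi](g-1)$, one has almost surely
$$
\Bigg|\frac{[\mu-\pi](G)}{\mu(G)}\,[\mu-\pi](\tilde\varphi)\Bigg|
\;\le\; \frac{2\overline{C}}{\underline{C}}\,\|g-1\|_{\infty}\,\big|[\mu-\pi](\tilde\varphi)\big| \, ,
$$
using $\mu(G)\ge\underline{C}$, $\pi(G)\le\overline{C}$ and $|[\mu-\pi](g-1)|\le 2\|g-1\|_{\infty}$. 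Squaring, taking expectations, and applying your first-paragraph $L^2$ bound to $[\mu-\pi](\tilde\varphi)$ with $\|\tilde\varphi\|_{\infty}\le 2\widetilde{C}\,\overline{C}/\underline{C}$ gives the remainder bound $C\|g-1\|_{\infty}^2 N_p^{-1}(1+p^2/N_p)$ directly. With this substitution your argument is complete using only the $L^2$ control you have already established, and no fourth-moment estimates are needed.
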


\begin{proof}
The first result follows by Lemma \ref{lem:est_pool} in the appendix and the second from Lemma \ref{lem:tech_lem2} also in the appendix.
\end{proof}

\begin{rem}
To show that \eqref{eq:cs_fv3}-\eqref{eq:cs_fv4} can hold, one can set, for instance $N_p=2^p$ and $\mathbb{P}_P(p)\propto 2^{-p}(p+1)\log_2(p+2)^2$. 
See for example \cite{ubpf} and \cite{rhee}.
\end{rem}

To continue our discussion, to complete our proof, we must know something about the quantities
$$
\|{\varphi}_\theta^{l}-{\varphi}_\theta^{l-1}\|_{\infty}^2\quad {\rm and}\quad
\Big\|G_{\theta}^{l-1}\frac{Z_\theta^{l-1}}{Z_\theta^{l}}-1\Big\|_{\infty}^2
$$
in terms of a possible decay as a function of $l$. 
To that end, we shall assume that these terms are 
$\mathcal{O}(h_l^{\beta})$ for some $\beta>0$.
This assumption can be verified for the example in Section \ref{sec:example}.
Recall from Section \ref{sec:example} that $h_l=2^{-l}$. 

\begin{prop}\label{prop:verify}
Assume (A\ref{hyp:N}). 
Then there is $C > 0$,  
depending on $f$ and $u_*$, and $\beta>0$ 
such that for all $u\in \mathsf{X}$
$$
\| p^l(\cdot;u) \|, \| p(\cdot;u) \| <C \, , \qquad 
\| p^l(\cdot;u) - p(\cdot;u) \|^2 \leq C h_l^\beta \, ,
$$
where the norm is $L^2(D)$.
Given a function $F : \mathbb{N} \times \mathsf{X} \rightarrow \bbR^n$, 
suppose that there is a $C'>0$ which does not depend on $(l,u)$
such that
\begin{equation}\label{eq:cty}
\|F(l,u) - F(\infty,u)\| \leq C' \| p^l(\cdot ; u) - p(\cdot ; u) \| \, ,
\end{equation} 
where the first norm is understood as the $n-$dimensional 
Euclidean norm, while the second norm is $L^2(D)$,
and $F(\infty, \cdot) := \lim_{l\rightarrow \infty} F(l,\cdot)$.
Then there is another $C>0$ which does not depend on $(l,u)$ such that
$$\|F(l,\cdot ) - F(l-1,\cdot )\|_\infty^2 \leq C h_l^\beta.$$
\end{prop}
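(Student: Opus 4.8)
The plan is to treat the two displayed claims separately, since the first is a purely deterministic statement about the finite element discretization of the PDE $(1)$ and the second is a short consequence of it together with the hypothesis \eqref{eq:cty}.

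For the first display I would begin from the weak formulation of $(1)$: find $p(\cdot;u)\in H_0^1(D)$ such that $a_u(p,q)=\langle f,q\rangle$ for all $q\in H_0^1(D)$, where $a_u(p,q)=\int_D \hat{u}\,\nabla p\cdot\nabla q\,dx$. By (A\ref{hyp:N}) we have $u_* < \hat{u} \le \bar{u} + \sum_{k=1}^K\sigma_k$ pointwise and \emph{uniformly in} $u\in\mathsf{X}$, so the bilinear form $a_u$ is coercive and bounded on $H_0^1(D)$ with constants independent of $u$. Lax--Milgram then yields existence and uniqueness of $p(\cdot;u)$ together with $\|p(\cdot;u)\|_{H_0^1}\le C\|f\|$, and the same argument on the finite element subspace gives $p^l(\cdot;u)$ with an identical bound; since the $L^2$ norm is dominated by the $H_0^1$ norm, this produces the uniform $L^2$ bounds $\|p(\cdot;u)\|,\|p^l(\cdot;u)\|<C$. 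For the error estimate I would invoke elliptic regularity on the convex domain $D$ to obtain $p(\cdot;u)\in H^2(D)$ with $\|p(\cdot;u)\|_{H^2}\le C\|f\|$ uniformly in $u$, then apply C\'ea's lemma and the standard piecewise-linear interpolation bound to get $\|p^l(\cdot;u)-p(\cdot;u)\|_{H_0^1}\le C h_l\|p(\cdot;u)\|_{H^2}$; an Aubin--Nitsche duality argument upgrades this to $\|p^l(\cdot;u)-p(\cdot;u)\|\le C h_l^2$ in $L^2$. Either route gives $\|p^l(\cdot;u)-p(\cdot;u)\|^2\le C h_l^\beta$ for a suitable $\beta>0$ (crudely $\beta=2$ via the $H^1$ estimate, sharply $\beta=4$ via duality), with all constants uniform in $u$ because the ellipticity constants and $\|p(\cdot;u)\|_{H^2}$ are.

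Given the first display, the second is a three-line triangle inequality. Fix $u\in\mathsf{X}$ and apply \eqref{eq:cty} at levels $l$ and $l-1$ to obtain
$$
\|F(l,u)-F(l-1,u)\| \le \|F(l,u)-F(\infty,u)\| + \|F(\infty,u)-F(l-1,u)\| \le C'\big(\|p^l(\cdot;u)-p(\cdot;u)\| + \|p^{l-1}(\cdot;u)-p(\cdot;u)\|\big).
$$
Squaring and using $(a+b)^2\le 2a^2+2b^2$ together with the convergence rate just established gives $\|F(l,u)-F(l-1,u)\|^2 \le 2(C')^2 C\,(h_l^\beta + h_{l-1}^\beta)$. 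Since $h_l=2^{-l}$ we have $h_{l-1}=2h_l$, so $h_{l-1}^\beta=2^\beta h_l^\beta$ and the right-hand side is at most $2(C')^2 C(1+2^\beta)h_l^\beta$; taking the supremum over $u\in\mathsf{X}$, which is harmless because every constant is $u$-independent, yields $\|F(l,\cdot)-F(l-1,\cdot)\|_\infty^2\le C h_l^\beta$ after renaming the constant.

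The main obstacle lies entirely in the first paragraph, namely securing the convergence rate \emph{uniformly in} $u$. The a priori bounds are immediate from the uniform ellipticity, but the rate requires $p(\cdot;u)\in H^2(D)$ with a norm controlled uniformly in $u$; this is where the convexity of $D$ and the regularity of the coefficient $\hat{u}$ enter, and it is the point to be handled with care, since (A\ref{hyp:N}) only posits continuity of the $\phi_k$. Fortunately only \emph{some} $\beta>0$ is needed, so even a reduced-regularity estimate suffices, and for the smooth coefficients of the concrete setup of Section \ref{ssec:example} the full $H^2$ theory \cite{brenner,ciarlet} applies directly.
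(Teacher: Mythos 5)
Your proof is correct, and its second half---the triangle-inequality reduction from \eqref{eq:cty} to the consecutive-level bound, using $h_{l-1}=2h_l$, squaring, and taking the supremum over $u$---is in substance identical to the paper's argument (the paper additionally spells out the equivalence of norms on $\bbR^n$ to pass from the Euclidean norm in \eqref{eq:cty} to the componentwise supremum norm $\|\cdot\|_\infty$, a step you use implicitly via $\max_i|v_i|\leq\|v\|$). Where you genuinely diverge is the first display: the paper does not prove it at all, but simply declares it a slight generalization of Section 4 of \cite{beskos} (with \cite{brenner,ciarlet} cited earlier in the text for the underlying finite element theory), whereas you reconstruct the whole chain---uniform ellipticity from (A\ref{hyp:N}), Lax--Milgram on $H_0^1(D)$ and on the finite element subspace, C\'ea's lemma, elliptic regularity on the convex domain, and Aubin--Nitsche duality. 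Your route buys self-containedness and makes explicit exactly where uniformity in $u$ comes from (the $u$-independent ellipticity constants and the uniform $H^2$ bound); its cost is that you must confront the coefficient-regularity issue you flag at the end: (A\ref{hyp:N}) only gives continuity of $\hat u$, which by itself does not support the $H^2$ theory, while the paper's citation-based proof silently inherits whatever smoothness the cited results assume. Your closing observations---that only \emph{some} $\beta>0$ is needed, so reduced-regularity estimates suffice, and that the concrete setup of Section \ref{ssec:example} has smooth coefficients so the full theory applies---are the right way to close that gap, and make your argument more transparent than the paper's one-line appeal to \cite{beskos}.
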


\begin{proof}
This is a slight generalization of the results of \cite{beskos}, Sec. 4,
where it was verified that 
$\Big\|G_{\theta}^{l-1}\frac{Z_\theta^{l-1}}{Z_\theta^{l}}-1\Big\|_{\infty}^2 = 
\mathcal{O}(h_l^\beta)$.
It is well known that for $v\in \mathbb{R}^n$, there is a $C>0$ such that 
$\|v\|_\infty \leq C \|v\|$. The result follows by taking supremum over $u$.
\end{proof}

Note that 
$G_{\theta}^{l-1}\frac{Z_\theta^{l-1}}{Z_\theta^{l}} = 
\frac{G_{\theta}^{l-1}}{\eta_\theta^{l-1}(G_{\theta}^{l-1})}$ and $G_\theta^{\infty}=1$.
So
$$
|G_{\theta}^{l-1}\frac{Z_\theta^{l-1}}{Z_\theta^{l}}-1| 
\leq \frac2{\eta_\theta^{l-1}(G_{\theta}^{l-1})}|G_{\theta}^{l-1}-1| \, .
$$
Defining $F(l,u) = G_\theta^{l}(u)$ then  
assumption (A\ref{hyp:A}) and Prop. 4.1 of \cite{beskos}
together with Proposition \ref{prop:verify} imply there is a $C>0$ such that
$$
\Big\|G_{\theta}^{l-1}\frac{Z_\theta^{l-1}}{Z_\theta^{l}}-1\Big\|_{\infty}^2 \leq C h_l^\beta \, .
$$
Defining $F(l,u) := {\varphi}_\theta^{l}(u)$, as defined in \eqref{eq:phil} and \eqref{eq:phi},
then it is easy to show that Proposition \ref{prop:verify} ensures 
$$
\|{\varphi}_\theta^{l}-{\varphi}_\theta^{l-1}\|_{\infty}^2 \leq C h_l^\beta \, .
$$
See equation (19) of \cite{beskos}. Assumptions (A\ref{hyp:A}) and (A\ref{hyp:C})
are similarly verified.

\begin{theorem}
Assume (A\ref{hyp:A}-\ref{hyp:C}).  Then there exist choices of $\mathbb{P}_L,\mathbb{P}_P$ and $(N_p)_{p\in\mathbb{N}_0}$, 
$1\leq N_0<N_1<\cdots$ so that $(\Xi_\theta^l)_{l\in\mathbb{N}_0}$ as defined in \eqref{eq:xi_l_def} with $(\Xi_\theta^{l,p})_{p\in\mathbb{N}_0}$
as in \eqref{eq:xi_l_p_def} will satisfy \eqref{eq:ub1}-\eqref{eq:ub3}. That is, \eqref{eq:ub_est} is an unbiased and finite variance estimator
of $\eta_{\theta}(\varphi_{\theta})$.
\end{theorem}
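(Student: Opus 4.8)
The plan is to reduce the theorem to the two summability conditions required by the two-layer framework of Section \ref{ssec:high}, and then to exhibit explicit choices of $(N_p)_{p\in\mathbb{N}_0}$, $\mathbb{P}_P$ and $\mathbb{P}_L$ that meet them. Recall that once \eqref{eq:cs_fv3}-\eqref{eq:cs_fv4} are verified, the coupled sum construction \eqref{eq:xi_l_def} (via \cite[Theorem 5]{vihola}) simultaneously guarantees that $(\Xi_\theta^l)_{l\in\mathbb{N}_0}$ satisfies \eqref{eq:ub1}-\eqref{eq:ub2} and that each $\Xi_\theta^l$ has finite variance; moreover the same argument yields the quantitative bound
$$
\mathbb{E}_\theta[\|\Xi_\theta^l\|^2] \;\le\; C\sum_{p\in\mathbb{N}_0}\frac{1}{\overline{\mathbb{P}}_P(p)}\,\varepsilon_\theta^{l,p},
$$
where $\varepsilon_\theta^{l,p}$ denotes the $L^2$-error of the level-$l$ block estimator appearing on the left-hand side of Proposition \ref{prop:main_prop} (finiteness of this sum already being enough for $L^2$-convergence of the blocks, which is all \cite{vihola} needs). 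Thus the theorem comes down to controlling this single sum in $p$ and then arranging \eqref{eq:ub3} through $\mathbb{P}_L$.

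For the inner layer I would take $N_p=2^p$ and $\mathbb{P}_P(p)\propto 2^{-p}(p+1)\log_2(p+2)^2$, as in the Remark. Proposition \ref{prop:main_prop} gives, for $l\ge 1$,
$$
\varepsilon_\theta^{l,p}\;\le\;\frac{C}{N_p}\Big(1+\frac{p^2}{N_p}\Big)\,B_\theta^l,\qquad
B_\theta^l:=\|\varphi_\theta^{l}-\varphi_\theta^{l-1}\|_\infty^2+\Big\|G_\theta^{l-1}\tfrac{Z_\theta^{l-1}}{Z_\theta^{l}}-1\Big\|_\infty^2,
$$
and the analogous bound without $B_\theta^l$ for $l=0$. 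Since the geometric factor dominates the tail, $\overline{\mathbb{P}}_P(p)\asymp 2^{-p}(p+1)\log_2(p+2)^2$, so $1/\overline{\mathbb{P}}_P(p)\asymp 2^{p}/[(p+1)\log_2(p+2)^2]$, while $N_p^{-1}(1+p^2/N_p)\le 2^{-p}(1+p^2 2^{-p})$. The crucial $2^{-p}$ cancels the $2^{p}$, leaving a summand of order $[(p+1)\log_2(p+2)^2]^{-1}$, whose series converges by the Bertrand criterion; the correction $p^2 2^{-2p}$ contributes a trivially convergent series. Hence \eqref{eq:cs_fv3}-\eqref{eq:cs_fv4} hold, the factor $B_\theta^l$ being constant in $p$ and therefore simply factoring out. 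Feeding this back into the variance bound yields, for a constant $C$ uniform in $l$, $\mathbb{E}_\theta[\|\Xi_\theta^l\|^2]\le C\,B_\theta^l$ for $l\ge 1$ and $\mathbb{E}_\theta[\|\Xi_\theta^0\|^2]\le C$.

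It remains to verify \eqref{eq:ub3} and pick $\mathbb{P}_L$. By Proposition \ref{prop:verify} together with the discussion following it (Proposition 4.1 of \cite{beskos} and assumptions (A\ref{hyp:A}), (A\ref{hyp:C})), $B_\theta^l=\mathcal{O}(h_l^\beta)=\mathcal{O}(2^{-l\beta})$, so $\mathbb{E}_\theta[\|\Xi_\theta^l\|^2]\le C\,2^{-l\beta}$ for all $l$. Choosing $\mathbb{P}_L(l)\propto 2^{-l\beta/2}$ (a genuine probability mass function on $\mathbb{N}_0$ for any $\beta>0$) then gives
$$
\sum_{l\in\mathbb{N}_0}\frac{1}{\mathbb{P}_L(l)}\mathbb{E}_\theta[\|\Xi_\theta^l\|^2]\;\le\;C\sum_{l\in\mathbb{N}_0}2^{l\beta/2}\,2^{-l\beta}=C\sum_{l\in\mathbb{N}_0}2^{-l\beta/2}<+\infty,
$$
which is \eqref{eq:ub3}; together with the unbiasedness of $(\Xi_\theta^l)_{l\in\mathbb{N}_0}$ this is exactly the assertion of the theorem, so that drawing $L\sim\mathbb{P}_L$ and forming \eqref{eq:ub_est} produces an unbiased, finite-variance estimator of $\eta_\theta(\varphi_\theta)$. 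I expect the main obstacle to be the bookkeeping coupling the two randomization layers: one must ensure that the constant emerging from the inner $p$-sum is uniform in $l$, so that the clean decay $\mathbb{E}_\theta[\|\Xi_\theta^l\|^2]=\mathcal{O}(2^{-l\beta})$ survives, and then balance the two tail choices — the $\mathbb{P}_P$ tail needs the extra $\log_2^2$ factor to tame the $p^2/N_p$ correction at the Bertrand boundary, while $\mathbb{P}_L$ must decay strictly slower than $2^{-l\beta}$ yet stay summable. The genuinely delicate quantitative input, namely the rate $\beta$ and the factorisation of $B_\theta^l$, is already supplied by Propositions \ref{prop:main_prop} and \ref{prop:verify}, so the residual work is the verification that these series converge for the stated choices.
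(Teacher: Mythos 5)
Your proposal reaches the correct conclusions and follows essentially the same route as the paper: everything is reduced to Proposition \ref{prop:main_prop}; the inner layer uses $N_p=2^p$ and $\mathbb{P}_P(p)\propto 2^{-p}(p+1)\log_2(p+2)^2$; the level-dependent factor $B_\theta^l=\mathcal{O}(h_l^\beta)$ (supplied by Proposition \ref{prop:verify} and the discussion following it) factors out of the $p$-sum; and your $\mathbb{P}_L(l)\propto 2^{-l\beta/2}$ is exactly the paper's choice $\mathbb{P}_L(l)\propto h_l^{\alpha\beta}$ with $\alpha=1/2$. The only structural difference is bookkeeping: the paper controls $\mathbb{E}_{\theta}[\|\Xi_\theta^l\|^2]$ by hand, expanding the coupled sum into diagonal terms (bounded via the telescoping identity and Proposition \ref{prop:main_prop}, giving \eqref{eq:main_theo_eq2}) plus cross terms handled by Cauchy--Schwarz, whereas you delegate this to a quantitative reading of \cite[Theorem 5]{vihola}.

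That delegation conceals the one step that fails as literally written. The bound $\mathbb{E}_\theta[\|\Xi_\theta^l\|^2]\le C\sum_{p}\varepsilon_\theta^{l,p}/\overline{\mathbb{P}}_P(p)$ cannot follow from the coupled-sum theory alone: $\mathbb{E}_\theta[\Xi_\theta^l]$ equals the limit of the block estimators, namely $\eta_\theta^{l}(\varphi_\theta^{l})-\eta_\theta^{l-1}(\varphi_\theta^{l-1})$ for $l\ge1$, so $\mathbb{E}_\theta[\|\Xi_\theta^l\|^2]\ge\|\eta_\theta^{l}(\varphi_\theta^{l})-\eta_\theta^{l-1}(\varphi_\theta^{l-1})\|^2$, and this is not controlled by the centered errors $\varepsilon_\theta^{l,p}$ (were every block estimator exact, your right-hand side would vanish while the left-hand side would not; the $p=0$ term of the Rhee--Glynn/Vihola second-moment identity carries precisely this contribution, because $\Xi_\theta^{l,0}$ is uncentered). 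The correct statement is $\mathbb{E}_\theta[\|\Xi_\theta^l\|^2]\le \|\eta_\theta^{l}(\varphi_\theta^{l})-\eta_\theta^{l-1}(\varphi_\theta^{l-1})\|^2+C\sum_{p}\varepsilon_\theta^{l,p}/\overline{\mathbb{P}}_P(p)$, and your final estimate $\mathbb{E}_\theta[\|\Xi_\theta^l\|^2]\le C B_\theta^l$ survives only because the extra term is itself $\mathcal{O}(B_\theta^l)$: by \eqref{eq:rat} and $\eta_\theta^{l-1}(G_\theta^{l-1})=Z_\theta^{l}/Z_\theta^{l-1}$,
\begin{equation*}
\eta_\theta^{l}(\varphi_\theta^{l})-\eta_\theta^{l-1}(\varphi_\theta^{l-1})
=\frac{\eta_\theta^{l-1}\big(G_\theta^{l-1}(\varphi_\theta^{l}-\varphi_\theta^{l-1})\big)}{\eta_\theta^{l-1}(G_\theta^{l-1})}
+\eta_\theta^{l-1}\Big(\varphi_\theta^{l-1}\Big(G_\theta^{l-1}\tfrac{Z_\theta^{l-1}}{Z_\theta^{l}}-1\Big)\Big),
\end{equation*}
so that (A\ref{hyp:A}) and (A\ref{hyp:C}) give $\|\eta_\theta^{l}(\varphi_\theta^{l})-\eta_\theta^{l-1}(\varphi_\theta^{l-1})\|\le C (B_\theta^l)^{1/2}$, uniformly in $l$. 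Adding this one observation closes your argument. To be fair, the paper's own proof is loose at the same spot: its parenthetical remark that the $s=q=0$ terms are ``$\mathcal{O}(1)$'' would not, read literally, justify the factor $h_l^\beta$ appearing against those terms in the subsequent display (without it, the sum over $l$ would diverge for any choice of $\mathbb{P}_L$); what is needed, and what is true by the identity above, is that these terms are $\mathcal{O}(h_l^\beta)$.
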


\begin{proof}
Throughout the proof $C$ is a finite constant that will not depend on $l$ or $p$ and whose value will change upon each appearance.
Given the commentary above, we need only show that \eqref{eq:ub3} can hold for some given choices of of $\mathbb{P}_L,\mathbb{P}_P$ and $(N_p)_{p\in\mathbb{N}_0}$.
We have that
\begin{equation}\label{eq:main_theo_eq1}
\sum_{l\in\mathbb{N}_0}\frac{1}{\mathbb{P}_L(l)}\mathbb{E}_{\theta}[\|\Xi_{\theta}^l\|^2]  = 
\sum_{(l,p)\in\mathbb{N}_0^2}\frac{\mathbb{P}_P(p)}{\mathbb{P}_L(l)}\Big\{
\sum_{s=0}^p \frac{\mathbb{E}_{\theta}[\|\Xi_{\theta}^{l,s}\|^2]}{\overline{\mathbb{P}}_P(s)^2} + 2\sum_{0\leq s <q\leq p}
\frac{\mathbb{E}_{\theta}[\|\Xi_{\theta}^{l,s}\|\|\Xi_{\theta}^{l,q}\|]}{\overline{\mathbb{P}}_P(p)\overline{\mathbb{P}}_P(q)}
\Big\}.
\end{equation}
Now recalling \eqref{eq:xi_l_p_def} and noting that for $p\in\mathbb{N}$
$$
\eta_{\theta}^{0,N_{0:p}}(\varphi_{\theta}^0) - \eta_{\theta}^{0,N_{0:p-1}}(\varphi_{\theta}^0) =
\eta_{\theta}^{0,N_{0:p}}(\varphi_{\theta}^0) - \eta_{\theta}^{0}(\varphi_{\theta}^0)  - \{\eta_{\theta}^{0,N_{0:p-1}}(\varphi_{\theta}^0)-\eta_{\theta}^{0}(\varphi_{\theta}^0)\}
$$
and that for $p\in\mathbb{N}$
$$
\frac{\eta_{\theta}^{l-1,N_{0:p}}(G_{\theta}^{l-1}\varphi_{\theta}^l)}{\eta_{\theta}^{l-1,N_{0:p}}(G_{\theta}^{l-1})} - \eta_{\theta}^{l-1,N_{0:p}}(\varphi_{\theta}^{l-1})
-
\Big(\frac{\eta_{\theta}^{l-1,N_{0:p-1}}(G_{\theta}^{l-1}\varphi_{\theta}^l)}{\eta_{\theta}^{l-1,N_{0:p-1}}(G_{\theta}^{l-1})} - \eta_{\theta}^{l-1,N_{0:p-1}}(\varphi_{\theta}^{l-1})\Big) = 
$$
$$
\frac{\eta_{\theta}^{l-1,N_{0:p}}(G_{\theta}^{l-1}\varphi_{\theta}^l)}{\eta_{\theta}^{l-1,N_{0:p}}(G_{\theta}^{l-1})} - \eta_{\theta}^{l-1,N_{0:p}}(\varphi_{\theta}^{l-1}) -
\Big(
\frac{ \eta_\theta^{l-1}(G_{\theta}^{l-1}\varphi_\theta^{l}) }
{\eta_\theta^{l-1}(G_{\theta}^{l-1})}- \eta_\theta^{l-1}(\varphi_\theta^{l-1})
\Big) -
$$
$$
\Big\{
\Big(\frac{\eta_{\theta}^{l-1,N_{0:p-1}}(G_{\theta}^{l-1}\varphi_{\theta}^l)}{\eta_{\theta}^{l-1,N_{0:p-1}}(G_{\theta}^{l-1})} - \eta_{\theta}^{l-1,N_{0:p-1}}(\varphi_{\theta}^{l-1})\Big) -
\Big(
\frac{ \eta_\theta^{l-1}(G_{\theta}^{l-1}\varphi_\theta^{l}) }
{\eta_\theta^{l-1}(G_{\theta}^{l-1})}- \eta_\theta^{l-1}(\varphi_\theta^{l-1})
\Big)
\Big\}
$$
by Proposition \ref{prop:main_prop} that for $(l,s)\in\mathbb{N}_0^2$
\begin{equation}\label{eq:main_theo_eq2}
\mathbb{E}_{\theta}[\|\Xi_{\theta}^{l,s}\|^2] \leq \frac{Ch_l^{\beta}}{N_s}\Big(1+\frac{s^2}{N_s}\Big).
\end{equation}
Also, by using Cauchy-Schwarz, $(l,p,q)\in\mathbb{N}_0\times\mathbb{N}^2$
\begin{equation}\label{eq:main_theo_eq3}
\mathbb{E}_{\theta}[\|\Xi_{\theta}^{l,s}\|\|\Xi_{\theta}^{l,q}\|] \leq \frac{Ch_l^{\beta}}{N_s^{1/2}N_q^{1/2}}\Big(1+\frac{s^2}{N_s}\Big)^{1/2}\Big(1+\frac{q^2}{N_q}\Big)^{1/2}.
\end{equation}
Then using the bounds \eqref{eq:main_theo_eq2}-\eqref{eq:main_theo_eq3} in \eqref{eq:main_theo_eq1} gives the upper-bound (noting that the
case $s=0=q$ the terms $\mathbb{E}_{\theta}[\|\Xi_{\theta}^{l,s}\|^2]$ and $\mathbb{E}_{\theta}[\|\Xi_{\theta}^{l,s}\|\|\Xi_{\theta}^{l,q}\|]$ are $\mathcal{O}(1)$ so one can find a $C$ such that the following upper-bound holds)
\begin{eqnarray*}
\sum_{l\in\mathbb{N}_0}\frac{1}{\mathbb{P}_L(l)}\mathbb{E}_{\theta}[\|\Xi_{\theta}^l\|^2]  & \leq &
C\sum_{(l,p)\in\mathbb{N}_0^2}\frac{\mathbb{P}_P(p)h_l^{\beta}}{\mathbb{P}_L(l)}\Bigg\{
\sum_{s=0}^p  \frac{\Big(1+\frac{s^2}{N_s}\Big)}{N_s\overline{\mathbb{P}}_P(s)^2} + 
\sum_{0\leq s <q\leq p}
\frac{\Big(1+\frac{s^2}{N_s}\Big)^{1/2}\Big(1+\frac{q^2}{N_q}\Big)^{1/2}}{N_s^{1/2}N_q^{1/2}\overline{\mathbb{P}}_P(p)\overline{\mathbb{P}}_P(q)}
\Bigg\}.
\end{eqnarray*}
Now if one chooses, for instance $N_p=2^p$ and $\mathbb{P}_P(p)\propto 2^{-p}(p+1)\log_2(p+2)^2$ and $\mathbb{P}_L(l)\propto h_l^{\alpha\beta}$ for any $\alpha\in(0,1)$
then \eqref{eq:ub3} is satisfied and hence the proof is completed.
\end{proof}

In most cases of practical interest,  it is not possible to choose 
$\mathbb{P}_L,\mathbb{P}_P$ and $(N_p)_{p\in\mathbb{N}_0}$ so that 
\eqref{eq:ub_est} is an unbiased and finite variance estimator, as well as having finite expected cost. Suppose, in the case of 
Section \ref{sec:example}, 
the cost to evaluate $G_{\theta}^l$ is $\mathcal{O}(h_l^{-1})$ and $\beta=1$. 
Then, just as in \cite{ubpf}, if we 
choose $\mathbb{P}_L(l)\propto 2^{-l}(l+1)\log_2(l+2)^2$, $N_p=2^p$, 
and $\mathbb{P}_P(p)\propto 2^{-p}(p+1)\log_2(p+2)^2$,
then to achieve a variance of $\mathcal{O}(\epsilon^2)$ (for $\epsilon>0$ arbitrary) the cost is $\mathcal{O}(\epsilon^{-2}|\log(\epsilon)|^{2+\delta})$
for any $\delta>0$, with high probability.
For the MLSMC method in \cite{beskos}, the cost to obtain a mean square error of 
$\mathcal{O}(\epsilon^2)$ is
$\mathcal{O}(\epsilon^{-2}\log(\epsilon)^2)$, 
which is a mild reduction in cost. 
However, we note that this discussion is constrained to the case of a single processor.
The unbiased method is straightforwardly parallelized.

\section{Numerical Results}\label{sec:numerics}

First we will consider a toy example where we can analytically calculate the marginal likelihood
and investigate the performance of the resulting estimator in comparison to the 
estimator obtained using the original MLSMC method of \cite{beskos} (not presented here). 
Subsequently we will consider the example from Section \ref{sec:example}.
Finally, for both examples 
we will explore the potential applicability of our estimators within the
context of parameter optimization using the 
stochastic gradient descent (SGD) method.

The forward model is the same for both problems, 
hence the anticipated rate of convergence is the same, and is estimated 
as $\beta=4$, just as in \cite{beskos}. 
The cost would be $\mathcal{O}(h_l^{-\gamma})$ in general for the problem in Section \ref{sec:example}, 
and for the particular setup described in \ref{ssec:example} $\gamma=1$.
We choose $h_l = 2^{-l}$ and $\bbP_L(l) \propto 2^{-2.5 l}$.
This is far into the so-called canonical regime ($\beta>\gamma$), 
and therefore we allow unbounded $L$, i.e. $L_{\rm max}=\infty$ in the 
terminology of \cite{ubpf}. 
The reason for this is basically that the sum \eqref{eq:ub3} and the corresponding 
cost series both easily converge, if the cost is deterministic and 
$\mathcal{O}(h_l^{-\gamma})$ as a function of $h_l$.
However, in this case the cost depends 
upon the randomized estimator of the series in $p$.
Since the rate of convergence is borderline in the $p$ direction, 
$\beta_p=1=\gamma_p$, as in \cite{ubpf} we impose a maximum $P_{\rm max}$ on $P$.
This is necessary to prevent the possibility of the algorithm getting stuck with 
an extremely expensive sample. It is discussed further in that work. 
In particular, we choose $N_p=2^{p+3}$ and 
$$
\mathbb{P}_P(p) \propto 
\mathbb{I}(0\leq p\leq P_{\rm max}) 
\left\{\begin{array}{ll}
		2^{-p+4}
		& \textrm{if}~p<4, \\
		2^{-p}\cdot p \cdot \log_2(p)^2 & \textrm{otherwise} \, .
		\end{array}\right  .
		$$
The piecewise definition of $\mathbb{P}_P$ ensures that it has the correct asymptotic behaviour but is also monotonically decreasing. Note that in this regime, i.e. strongly canonical convergence in $L$, or large $\beta >\gamma$, the MLSMC method 
easily achieves the optimal complexity $\mathcal O(\epsilon^{-2})$. 
However, since the convergence rate in $P$ is necessarily subcanonical,  
our method therefore suffers from a logarithmic penalty, i.e.
$\mathcal O(\epsilon^{-2}\log(\epsilon)^{2+\delta})$, for any $\delta>0$. 
This cannot be observed in the simulations though. 
Empirically we observe that we can set $P_{\rm max}$ rather small, 
which is perhaps afforded by the very fast convergence in the $L$ direction.
This may also be why we cannot see the theoretically predicted log penalty 
in the simulations.

\subsection{Toy example}\label{ssec:toy}

We first consider an example where the marginal likelihood is analytically calculable. Consider the following PDE on $D$:
\begin{align*}
\nabla^2p  &=u,\quad \textrm{ on } D,       \\
p&= 0, \quad \textrm{ on } \partial D,
\end{align*}
where $D=[0,1]$. The solution of this PDE is $p(x;u)=\frac{u}{2}(x^2-x)$.
Define the observation operator as 
$$\mathcal{G}(u)=[p(x_1;u),p(x_2;u), \dots, p(x_M;u)]^{\intercal}\triangleq Gu \, .$$ 
Suppose the observation takes the form 
$y = \mathcal{G}(u) + \xi, \xi \sim N(0,\theta^{-1}\cdot\bm{I}_M), \xi \perp u$, 
and $\theta$ follows a log-normal distribution with mean 0 and variance $\sigma^2$. 
Then the unnormalized likelihood is given by 
$$
\gamma(u,\theta) = \theta^{\frac{M}{2}}\cdot\exp\left\{-\frac{\theta}{2}\|Gu - y\|^2 \right\} \cdot \frac{1}{\theta}\exp\Big\{ -\frac{(\log(\theta))^2} {2\sigma^2} \Big\} , 
$$
and the marginal likelihood is 
\begin{align*}
\gamma(\theta)  &  = \int_{-1}^1 \gamma(u,\theta) \textrm{d}u \\
& =  \theta^{\frac{M-2}{2}} \exp\Big\{ -\frac{(\log(\theta))^2} {2\sigma^2} \Big\} \int_{-1}^{1} \exp\left\{-\frac{\theta }{2} \|Gu - y\|^2 \right\} \textrm{d}u \\ 
& = \theta^{\frac{M-2}{2}} \exp\left\{ -\frac{\theta}{2} \Big(\|y\|^2 - \frac{(G^{\intercal}y)^2}{\|G\|^2}\Big)-\frac{(\log(\theta))^2} {2\sigma^2} \right\} \int_{-1}^{1} \exp \left\{ -\frac{\theta\|G\|^2}{2} \Big(u - \frac{G^{\intercal}y}{\|G\|^2 } \Big)^2  \right\} \textrm{d}u \\
& = \theta^{\frac{M-3}{2}} \frac{\sqrt{\pi/2}}{\|G\|} \exp\left\{ -\frac{\theta}{2} \Big(\|y\|^2 - \frac{(G^{\intercal}y)^2}{\|G\|^2}\Big) -\frac{(\log(\theta))^2} {2\sigma^2} \right\}  \Bigg( \textrm{erf}\Big(\sqrt{\frac{\theta}{2}}\|G\|(1- \frac{G^{\intercal}y}{\|G\|^2 })\Big) - \\
&\qquad\qquad\qquad\qquad\qquad\qquad\qquad \qquad\qquad\qquad\qquad\qquad~~~  \textrm{erf}\Big(\sqrt{\frac{\theta}{2}}\|G\|(-1- \frac{G^{\intercal}y}{\|G\|^2 })\Big) \Bigg). 
\end{align*}
So the log-likelihood is then given by
\begin{align*}
\log\big(\gamma(\theta)\big)   = &~ \frac{M-3}{2}\log(\theta) - \frac{\theta}{2} \Big(\|y\|^2 -  \frac{(G^{\intercal}y)^2}{\|G\|^2}\Big) -\frac{(\log(\theta))^2} {2\sigma^2}  + \log \Bigg( \textrm{erf}\Big(\sqrt{\frac{\theta}{2}}\|G\|(1- \frac{G^{\intercal}y}{\|G\|^2 })\Big) - \\ 
&~  \textrm{erf}\Big(\sqrt{\frac{\theta}{2}}\|G\|(-1- \frac{G^{\intercal}y}{\|G\|^2 })\Big) \Bigg) + C,  
\end{align*}
and the derivative of the log-likelihood is 
\begin{align*}
\frac{\partial\log\big(\gamma(\theta)\big)}{ \partial\theta}  & = \frac{M-3}{2\theta} - \frac{\Big(\|y\|^2 - \frac{(G^{\intercal}y)^2}{\|G\|^2}\Big)}{2} - \frac{\log(\theta)} {\sigma^2 \theta}  + \frac{1}{ \textrm{erf}\Big (\sqrt{\frac{\theta}{2}}\|G\|(1- \frac{G^{\intercal}y}{\|G\|^2 })\Big) - \textrm{erf}\Big(\sqrt{\frac{\theta}{2}}\|G\|(-1- \frac{G^{\intercal}y}{\|G\|^2 })\Big)} \cdot \\
& \quad~ \frac{2}{\sqrt{\pi}}\Bigg( \exp\Big\{ -\frac{\theta\|G\|^2}{2}(1- \frac{G^{\intercal}y}{\|G\|^2 })^2 \Big\} \frac{\|G\|(1- \frac{G^{\intercal}y}{\|G\|^2 })}{2\sqrt{2\theta}} - \\
& \qquad\qquad \exp\Big\{ -\frac{\theta\|G\|^2}{2}(-1- \frac{G^{\intercal}y}{\|G\|^2 })^2 \Big\} \frac{\|G\|(-1- \frac{G^{\intercal}y}{\|G\|^2 })}{2\sqrt{2\theta}}  \Bigg).
\end{align*}

First, the performance of the unbiased algorithm for a single gradient estimation 
$\frac{\partial\log\big(\gamma(\theta)\big)}{ \partial\theta}\Big|_{\theta=\theta^*}$ is verified. The data is generated with M=50, and observation operator $\mathcal{G}(u)=[p(x_1;u),p(x_2;u),$ $\dots, p(x_M;u)]^{\intercal}$ with $x_i = i/(M+1)$.  $\theta^*$ is set to be $2$, and the true value of the derivative of the log-likelihood at this point is calculated using the above analytical solution. 
For each $L$, the MLSMC estimator is realized 50 times and the MSE is reported. 
Similarly, the MSE of unbiased algorithm is calculated based on 50 realizations. 
The results are presented in Figure \ref{fig:single_anal}.
The cost reported in the plot is proportional to 
the sum of the cost per forward solve at level $l$ 
(tridiagonal linear system), $h_l^{-1}$, 
multiplied by the total number of samples at level $l$.

\begin{figure}[!htbp]
	\centering\includegraphics[width=\textwidth]{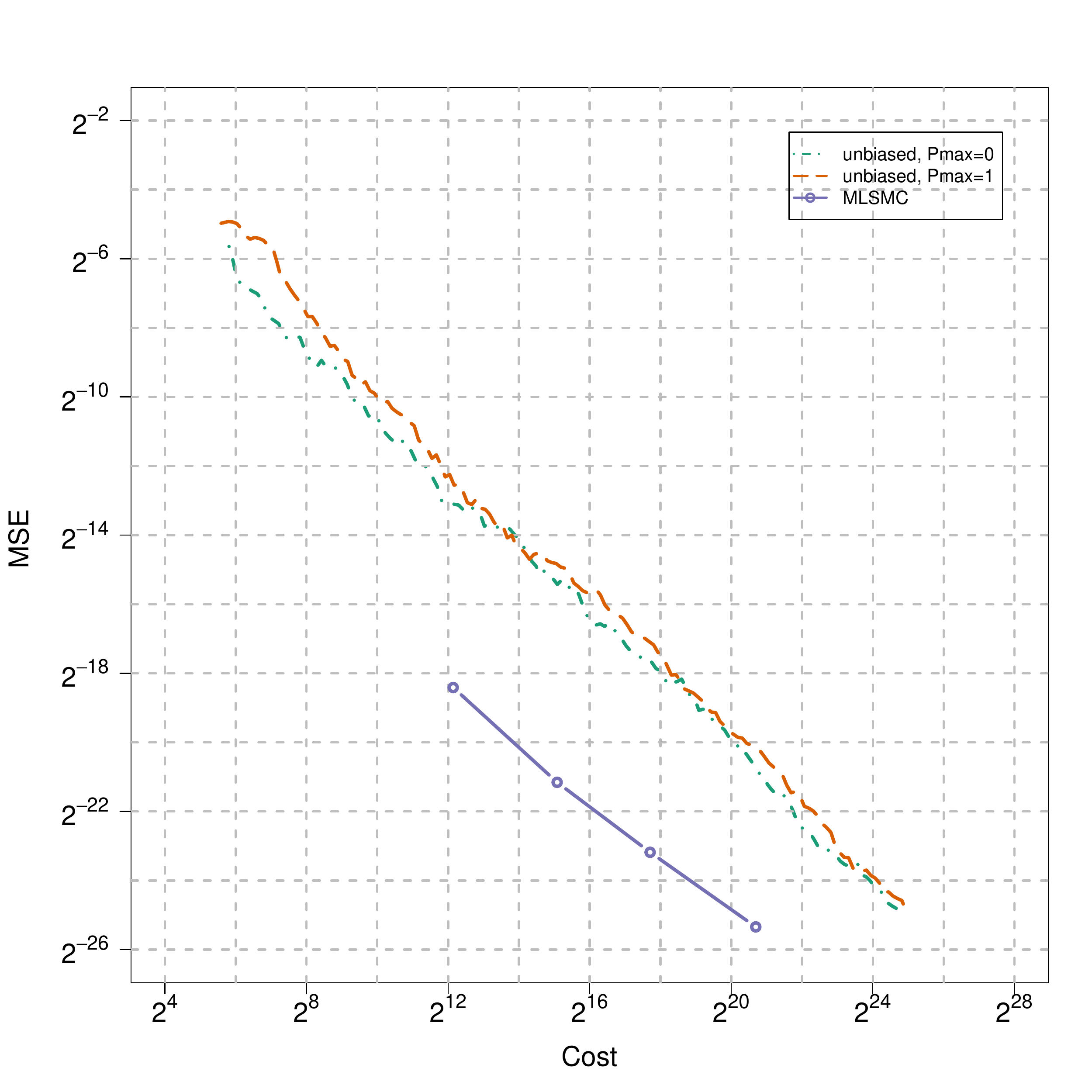}
	\caption{Toy example, single estimator. 
	{MSE vs cost for (i) the unbiased algorithm with different choices of 
	$P_{\rm max}$ and (ii) MLSMC.}}
	\label{fig:single_anal}
\end{figure}


\subsection{Example of Sec. \ref{sec:example}}\label{ssec:full}

Now that we understand the behaviour of the estimator we 
consider the example from Sec. \ref{sec:example}.
Here we do not have an analytical solution, so the true value of the target was first estimated with the MLSMC algorithm with L=12. 
This sampler was realized 50 times and the average of the estimator is taken as the ground truth. 
Now for each $L$, the MLSMC estimator is realized 50 times and the MSE is reported. 
Similarly, the MSE of unbiased algorithm is calculated based on 50 realizations. 
The results are presented in Figure \ref{fig:single_nonanal}.
The cost in the plot is proportional to 
the sum of the cost per forward solve at level $l$ 
(tridiagonal linear system), $h_l^{-1}$, 
multiplied by the total number of samples at level $l$.

\begin{figure}[!htbp]
	\centering\includegraphics[width=\textwidth]{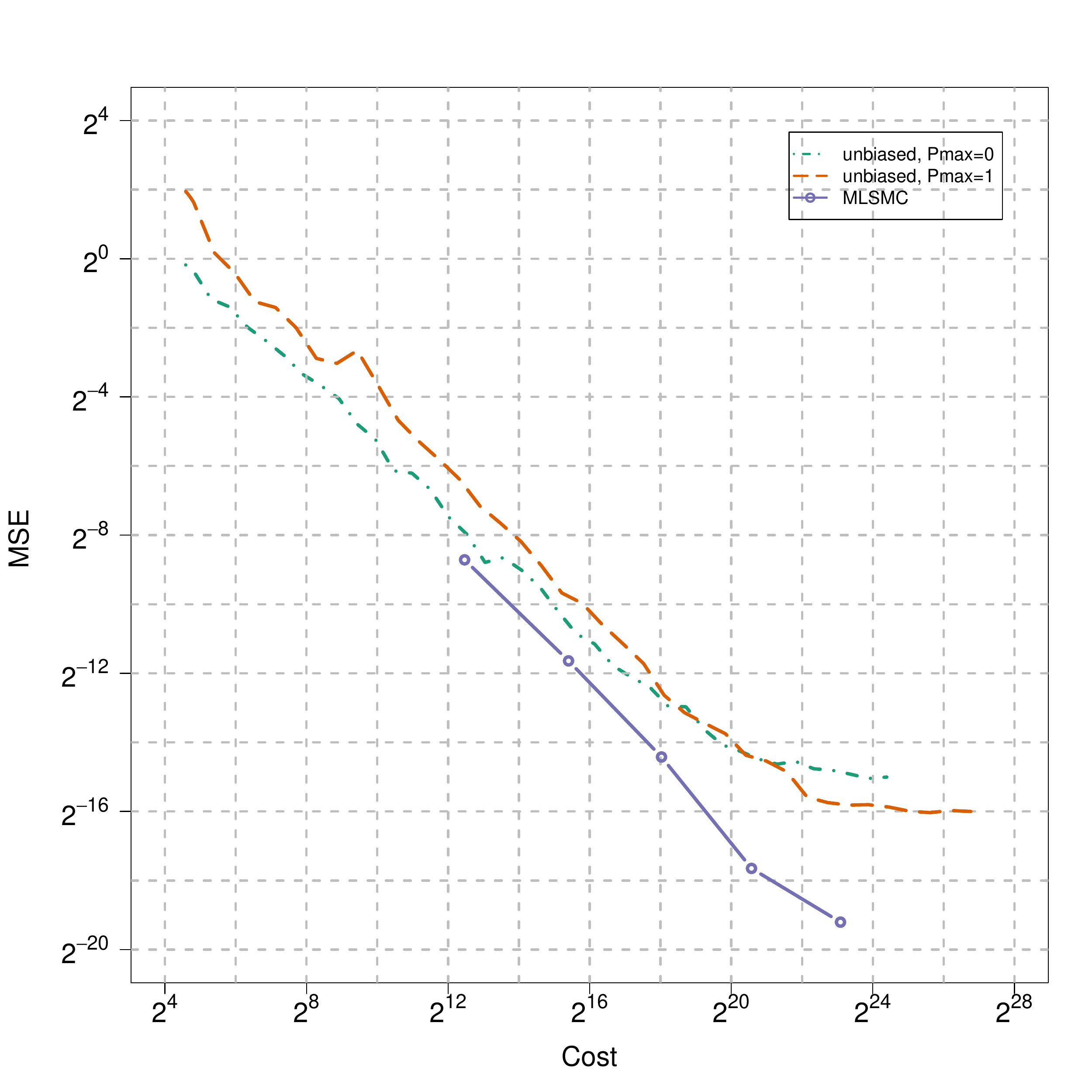}
	\caption{Example of Sec. \ref{sec:example}, single estimator. 
	{MSE vs cost for (i) the unbiased algorithm with different choices of 
	$P_{\rm max}$ and (ii) MLSMC.}}
	\label{fig:single_nonanal}
\end{figure}


\subsection{Stochastic gradient descent method}

In this section, we investigate the potential to use our unbiased estimators within 
the SGD method. 
Recall we want to find the MLE of $\theta$ by minimizing 
$-\log p(y|\theta) = -\log(\int \gamma_{\theta}(u)du)$. 
Our estimator given in equation \eqref{eq:ub_est} 
provides an unbiased estimator $\widehat{\eta_{\theta}(\varphi_{\theta})}$
of $\nabla_\theta \log p(y|\theta)$, for any choice of $M\geq 0$.
In other words $\bbE \widehat{\eta_{\theta}(\varphi_{\theta})} = \nabla_\theta \log p(y|\theta)$. We will see that it is most efficient to choose $M=1$.
To ensure the output of the SGD algorithm satisfies $\theta>0$, 
we let $\theta=\exp(\xi)$ and optimize $\xi$.  
The details are given in Algorithm \ref{algo:sgd}.

\begin{algorithm}[h!]
	\begin{enumerate}
		\item{Initialize $\xi_1$ and choose a sequence $\{\alpha_k\}_{k=1}^\infty$
		and a value $M\in \mathbb{N}$.}
		\item {For $k=1,\dots, K$ (or until convergence criterion is met)} 
			\begin{itemize}
				\item{Compute $\widehat{\eta_{\theta}(\varphi_{e^{\xi_k}})}$
				using \eqref{eq:ub_est}} 
				\item{Update $\xi_{k+1} = \xi_k - 
				\alpha_k \widehat{\eta_{\theta}(\varphi_{e^{\xi_k}})} \exp(\xi_k)$.}
				\end{itemize}
		\item{Return $\theta_{K+1}=\exp(\xi_{K+1})$.}
	\end{enumerate}
	\caption{SGD using new unbiased estimator.}
	\label{algo:sgd}
\end{algorithm}

As above, it makes sense to first explore the toy model with analytical solution.
As before, the MSE is calculated based on 50 realizations, and
the cost in the plot is again proportional to 
the sum of the cost per forward solve at level $l$ 
(tridiagonal linear system), $h_l^{-1}$, 
multiplied by the total number of samples at level $l$.

In Figure \ref{fig:alphakN} we explore the performance of the unbiased estimator
with different choices of $\alpha_k = \alpha_1/k$, $\alpha_1\in\{0.1,0.025\}$,
and different choices of the number of samples $M$ used to construct 
$\widehat{\eta_{\theta}(\varphi_{e^{\xi_k}})}$ using \eqref{eq:ub_est} in 
step 2 of Algorithm \ref{algo:sgd}. The two takeaways from this experiment are 
that (1) it is more efficient to take fewer samples $M$ (in particular $M=1$),
and (2) it is more efficient to choose a larger constant $\alpha_1 = 0.1$.
In particular, the dynamics of the algorithm experiences a phase 
transition as one varies the constant $\alpha_1$. 
A large enough value appears to provide gradient descent type exponential 
convergence $\mathcal O(e^{-\rm cost})$, 
while a value which is too small yields Monte Carlo (MC) type $\mathcal O(1/{\rm cost})$.
 It is notable that the exponential convergence eventually gives way to MC type convergence, 
 and that the point where this occurs increases proportionally to the additional 
 constant in cost incurred with larger sample size $M$, 
 so that the error curves for different values of $M$ eventually intersect.
 
 \begin{figure}[!htbp]
	\centering\includegraphics[width=\textwidth]{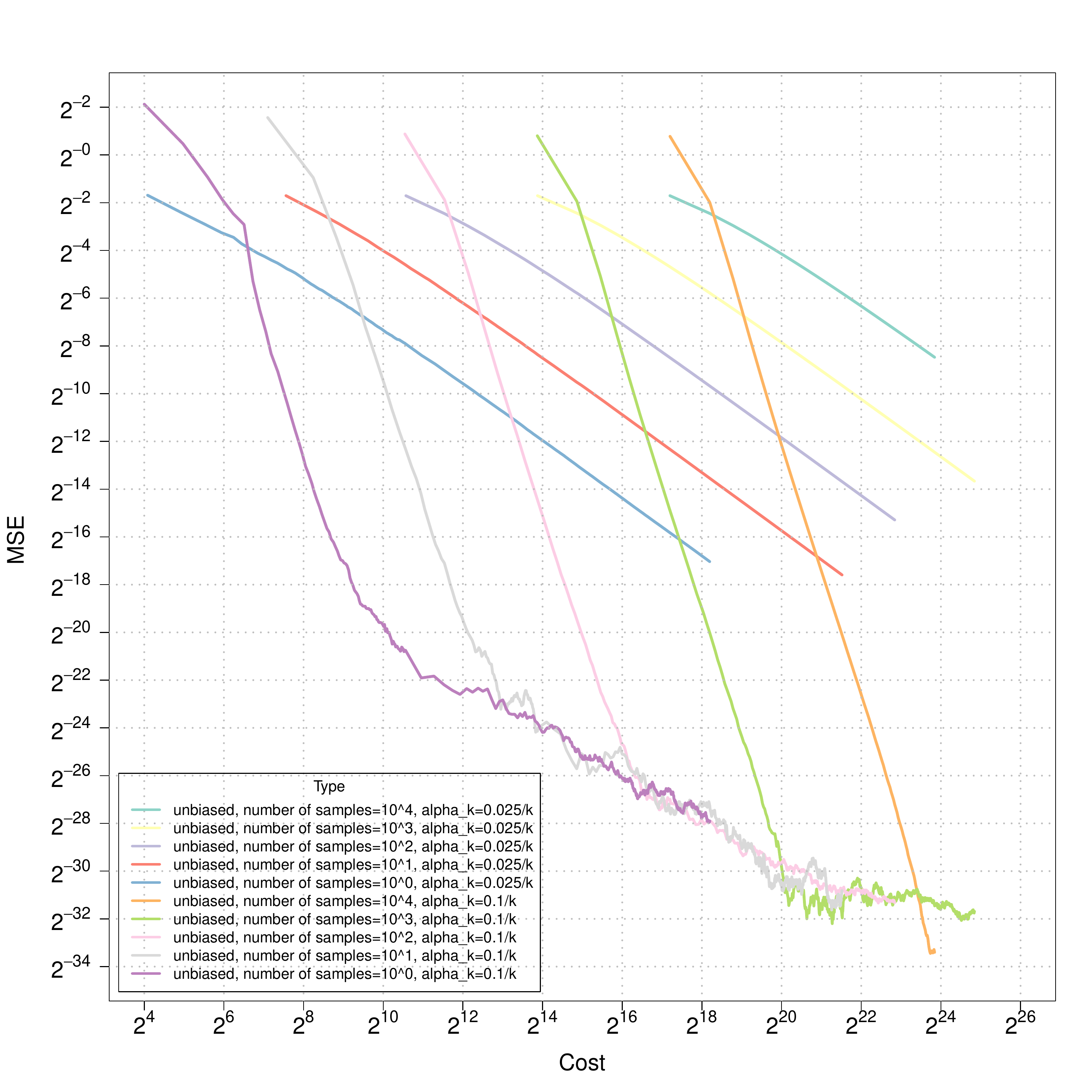}
	\caption{Toy example, SGD. 
	{MSE vs cost for $\alpha_k=0.025/k$ and $\alpha_k=0.1/k$ for a range of 
	sample sizes $M$. $P_{\rm max}=0$ is fixed.}}
	\label{fig:alphakN}
\end{figure}

Natural questions are then whether there is a limit 
to how large one can choose $\alpha_1$ and at which value precisely does the phase transition occur. 
These questions are partially answered by the experiments presented in 
Figure \ref{fig:alphakalpha} (a), where we see that $\alpha_1$ 
should not be chosen larger than $0.2$ and the phase transition happens in between 
$0.025$ and $0.05$. 
Figure \ref{fig:alphakalpha} (b) 
illustrates the benefits and drawbacks of using a constant $\alpha$. 
In particular, the algorithm may converge more quickly at first, 
but plateaus when it reaches the induced bias.

 \begin{figure}[!htbp]
	\centering\includegraphics[width=\textwidth]{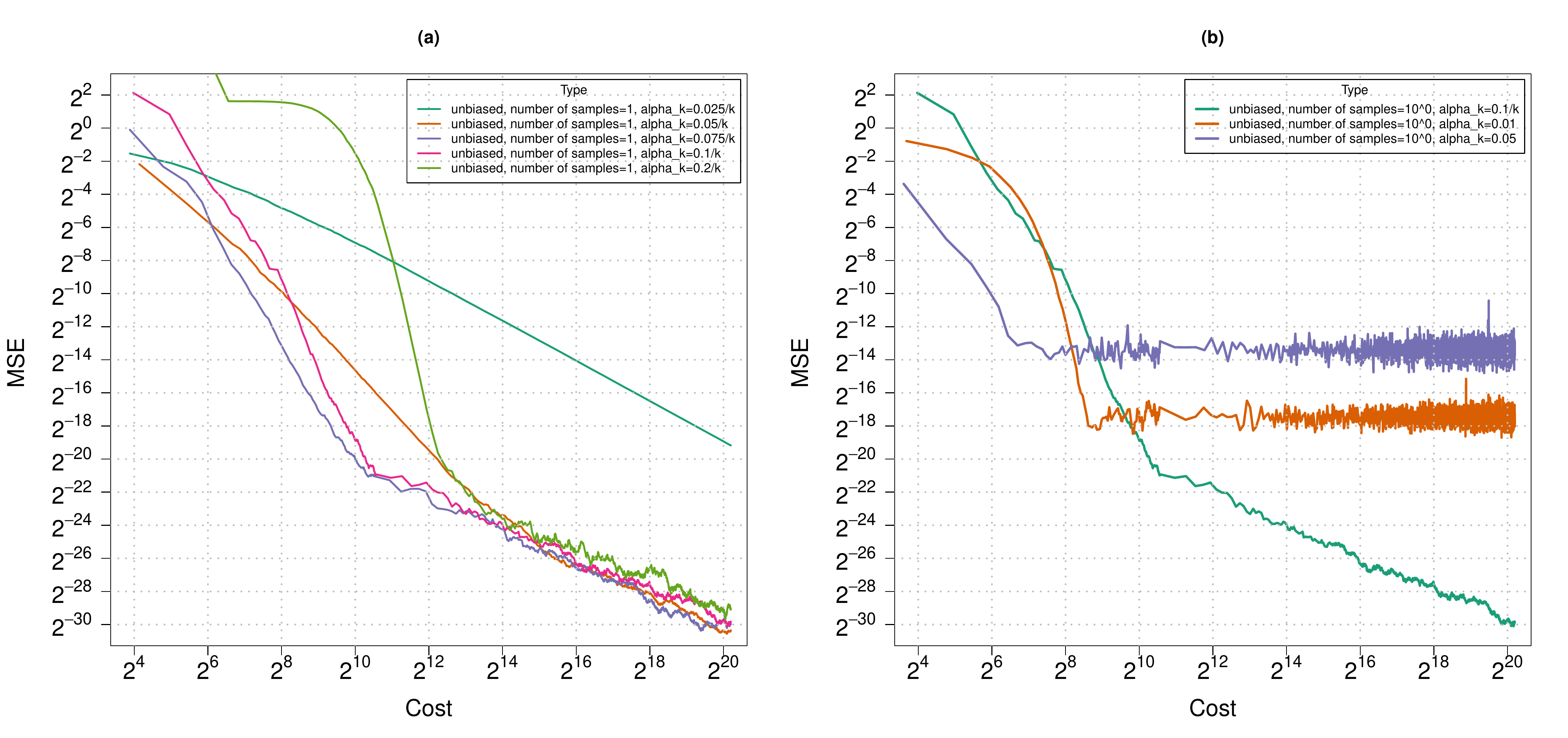}
	\caption{Toy example, SGD. 
	{MSE vs cost for (a) $\alpha_k=\alpha_1/k$ and a range of $\alpha_1$, 
	and (b) some examples of constant $\alpha$. $P_{\rm max}=0$ is fixed.}}
	\label{fig:alphakalpha}
\end{figure}

In Figure \ref{fig:pmax} we explore various choices of $P_{\rm max}$, 
for $M=1$ fixed and $\alpha_1=0.1$. 
It is apparent that it is preferable to choose a smaller value of $P_{\rm max}$.
We note however that there will be an induced bias, which will be larger
for smaller $P_{\rm max}$. 
However, for this particular problem we do not even observe that bias over the 
range of MSE and cost considered.

 \begin{figure}[!htbp]
	\centering\includegraphics[width=\textwidth]{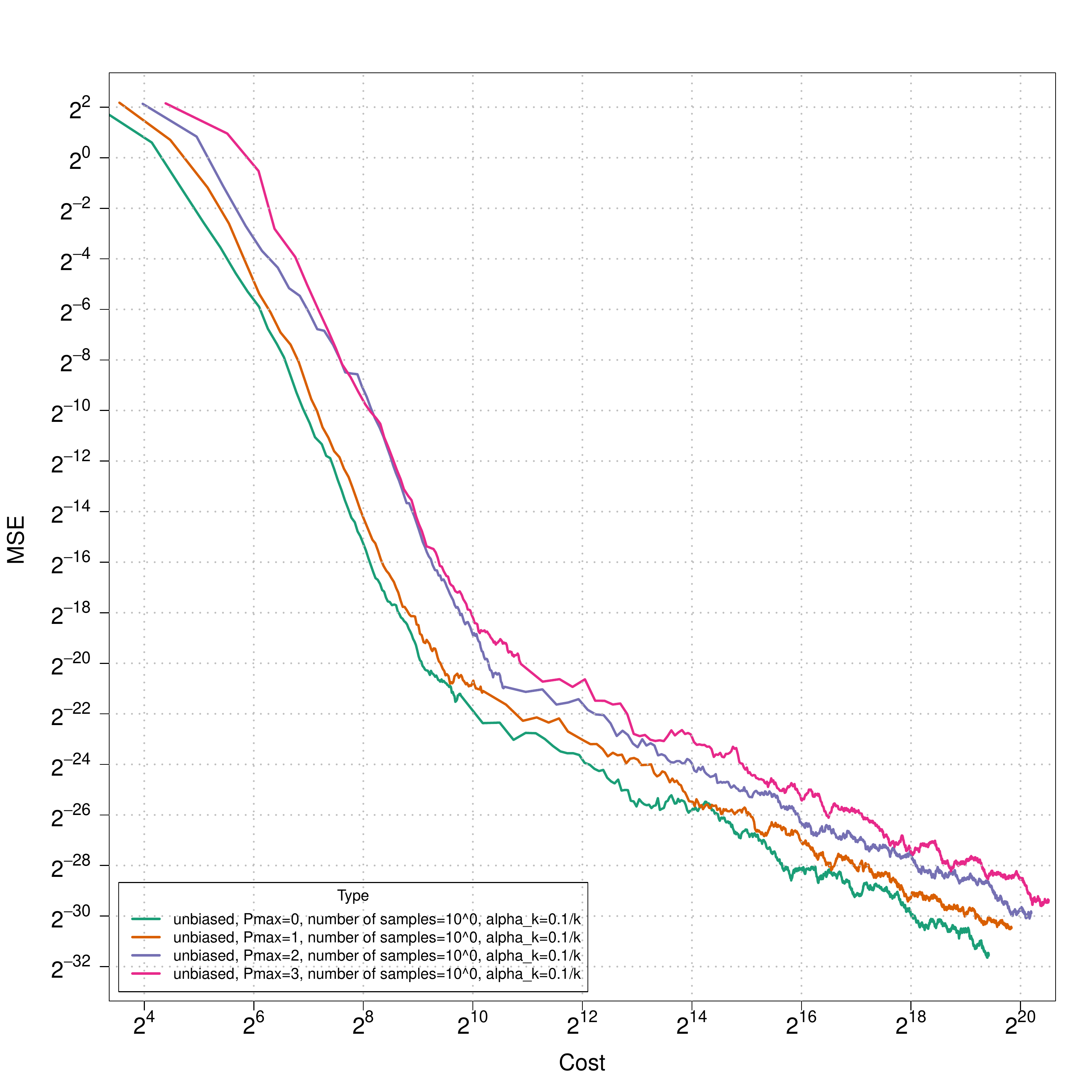}
	\caption{Toy example, SGD.
	{MSE vs cost for $M=1$ fixed and $\alpha_k=0.1/k$, and 
	various choices of $P_{\rm max}$.}}
	\label{fig:pmax}
\end{figure}

As a last experiment with the toy example, we compare the convergence of SGD using 
our unbiased algorithm with $P_{\rm max}=0$, $M=1$, and $\alpha_1=0.1$
to the analogous algorithm where an MLSMC estimator with various $L$ 
(single gradient estimator MSE $\propto 2^{-L}$)
replaces the unbiased estimator in step 2 of Algorithm \ref{algo:sgd}. 
Similar behaviour was observed for MLSMC relative to different choices of $\alpha_k$
as compared to the unbiased estimator. 
The results are shown in Figure \ref{fig:anal}.
Here it is clear that over a wide range of MSE the unbiased estimator 
provides a significantly more efficient alternative to the MLSMC estimator.

 \begin{figure}[!htbp]
	\centering\includegraphics[width=\textwidth]{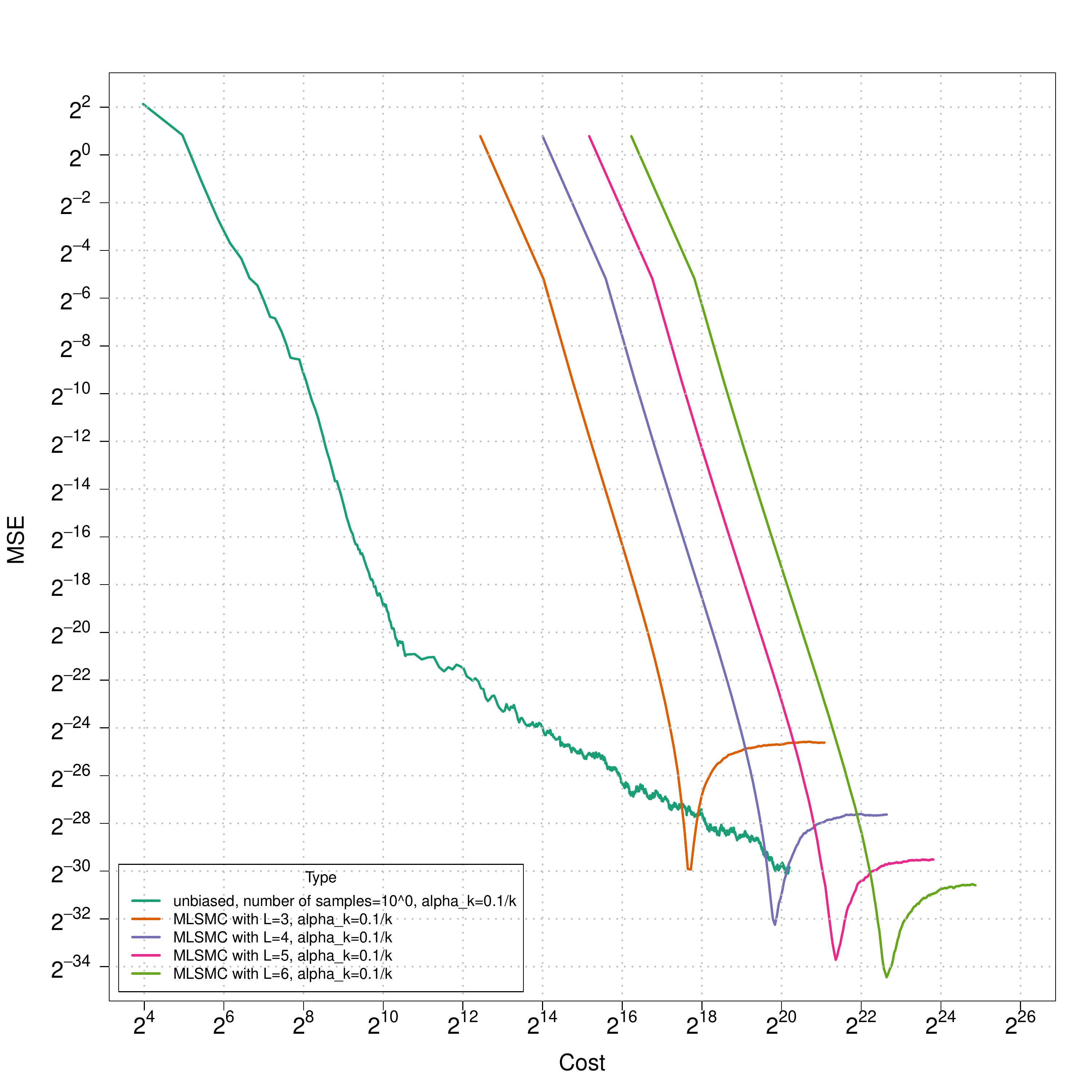}
	\caption{Toy example, SGD. 
	{MSE vs cost for $\alpha_k=0.1/k$, and unbiased estimator with 
	$P_{\rm max}=0$ and $M=1$ in comparison to MLSMC 
	estimator with different choices of $L$.}}
	\label{fig:anal}
\end{figure}

Finally, we consider the same last experiment except with the example of 
Sec. \ref{sec:example}. Again, over a wide range of MSE the unbiased estimator 
provides a significantly more efficient alternative to the MLSMC estimator. 
Here one can already observe the induced $P_{\rm max}=1$ bias 
for the unbiased estimator around $2^{-20}$.
Adjusting the various tuning parameters resulted in  
similar behaviour as was observed in the earlier 
experiments with the toy example. These results are not presented.

 \begin{figure}[!htbp]
	\centering\includegraphics[width=\textwidth]{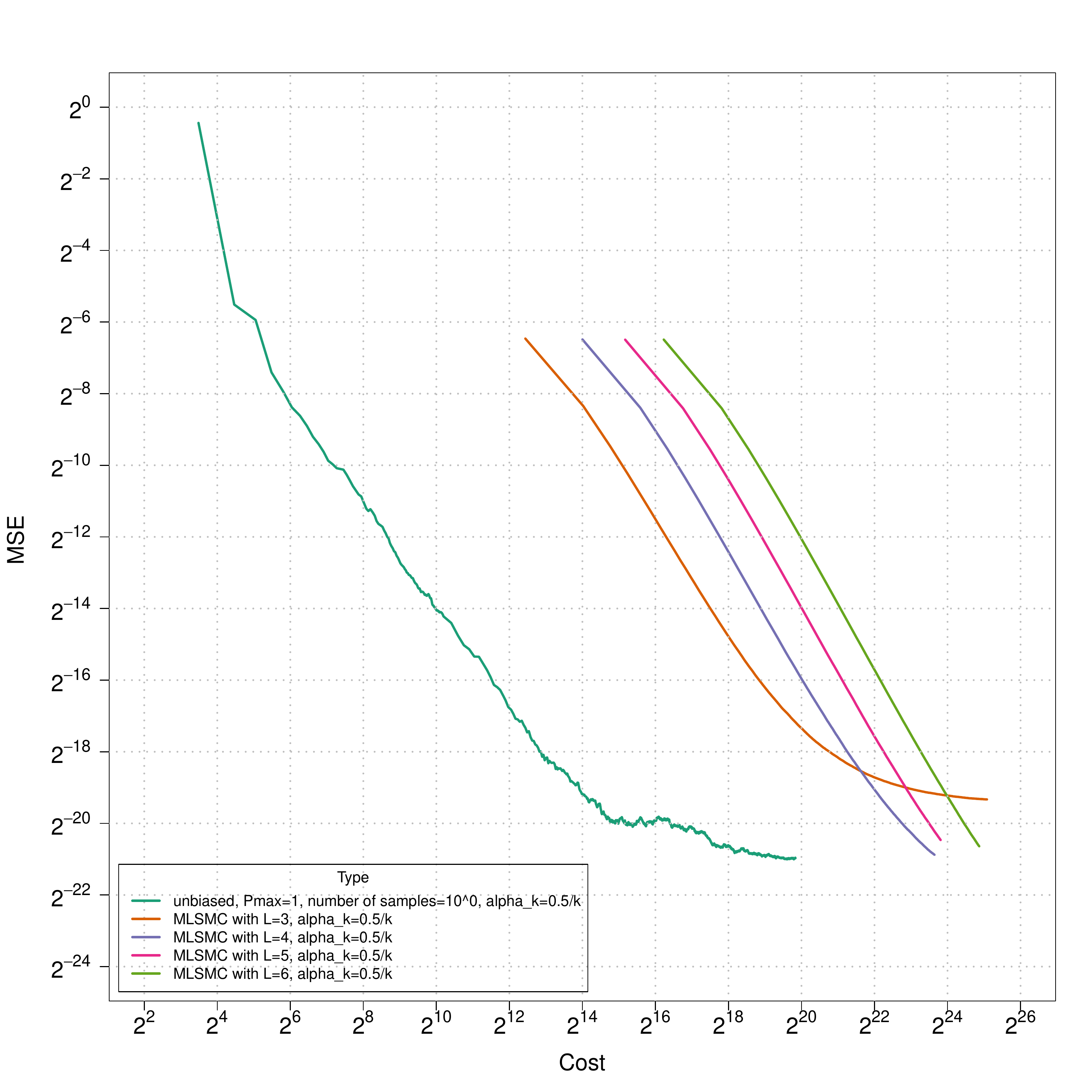}
	\caption{Example of Sec. \ref{sec:example}, SGD. 
	{MSE vs cost for $\alpha_k=0.5/k$, and unbiased estimator with 
	$P_{\rm max}=0$ and $M=1$ in comparison to MLSMC 
	estimator with different choices of $L$.}}
	\label{fig:nonanal}
\end{figure}

\subsubsection*{Acknowledgements}
AJ was supported by KAUST baseline funding.
KJHL was supported by The Alan Turing Institute under the 
EPSRC grant EP/N510129/1.

\appendix

\section{Proofs}\label{app:proofs}

We begin with some technical results that have been proved in previous works (e.g.~\cite{beskos,delm:04}).
Recall that $\mathbb{E}_{\theta,m}^N$ is an expectation w.r.t.~the probability with finite dimensional law \eqref{eq:mlsmc_law}, associated to the simulation
of Algorithm \ref{alg:mlsmc}.

\begin{lem}\label{lem:ml_bounds}
Assume (A\ref{hyp:A}-\ref{hyp:B}). Then for any $\theta\in\Theta$ there exists a $C<+\infty$ such that for any
$(l,N,\varphi)\in\mathbb{N}_0\times\mathbb{N}\times\mathcal{B}_b(\mathsf{X})$:
\begin{eqnarray*}
\mathbb{E}_{\theta,m}^N[[\eta_{\theta}^{l,N}-\eta_{\theta}^{l}](\varphi)^2] & \leq & \frac{C\|\varphi\|_{\infty}^2}{N} \\
|\mathbb{E}_{\theta,m}^N[[\eta_{\theta}^{l,N}-\eta_{\theta}^{l}](\varphi)]| & \leq & \frac{C\|\varphi\|_{\infty}}{N}.
\end{eqnarray*}
\end{lem}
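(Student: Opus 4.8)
The plan is to recognise the particle system of Algorithm~\ref{alg:mlsmc} as a Feynman--Kac particle model and to invoke the by-now standard $L_p$ error analysis for such systems (as in \cite{delm:04}); the only genuinely new point to check is that the resulting constant is \emph{uniform} in the level $l$. First I would install the classical telescoping decomposition of the global error into local, propagated sampling errors. Writing $\Phi_{s,l}:=\Phi_\theta^{l}\circ\cdots\circ\Phi_\theta^{s+1}$ for the composition of the one-step maps in \eqref{eq:Phi} (with $\Phi_{l,l}:=\mathrm{id}$), one has
$$
[\eta_\theta^{l,N}-\eta_\theta^{l}](\varphi) \;=\; \sum_{s=0}^{l}\Big[\Phi_{s,l}\big(\eta_\theta^{s,N}\big)-\Phi_{s,l}\big(\Phi_\theta^{s}(\eta_\theta^{s-1,N})\big)\Big](\varphi),
$$
where for $s\ge 1$ the $s$th term is the error created by the resample-and-move step at level $s$, conditional on the level-$(s-1)$ particles, and the $s=0$ term is the initial i.i.d.\ error from $\eta_\theta^{0}$. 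The key structural fact is that, conditional on the particles produced up to step $s-1$, the particles at step $s$ are i.i.d.\ with law $\Phi_\theta^{s}(\eta_\theta^{s-1,N})$, so that $[\eta_\theta^{s,N}-\Phi_\theta^{s}(\eta_\theta^{s-1,N})](\psi)$ is a centred average of bounded i.i.d.\ variables for any fixed bounded $\psi$.

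Next I would bound each summand. Linearising $\Phi_{s,l}$ around the true flow, the propagated local error $[\Phi_{s,l}(\eta_\theta^{s,N})-\Phi_{s,l}(\Phi_\theta^{s}(\eta_\theta^{s-1,N}))](\varphi)$ is, to first order, $[\eta_\theta^{s,N}-\Phi_\theta^{s}(\eta_\theta^{s-1,N})](D_{s,l}\varphi)$ for an explicit function $D_{s,l}\varphi$ built from the potentials $G_\theta$ and kernels $M_\theta$; these first-order terms are martingale increments in $s$, so their $L_2$ norms add in square. Under (A\ref{hyp:A}) the weights satisfy $\underline{C}\le G_\theta^l\le\overline{C}$, which keeps $\|D_{s,l}\varphi\|_\infty$ controlled and each conditional variance of order $\|\varphi\|_\infty^2/N$. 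The crucial input is (A\ref{hyp:B}): the uniform minorisation $\int_A M_\theta^l(u,\cdot)\ge\rho\int_A M_\theta^l(v,\cdot)$ forces the Feynman--Kac semigroup $\Phi_{s,l}$ to forget its initial condition geometrically, giving a bound $\|D_{s,l}\varphi\|_\infty\le C\kappa^{\,l-s}\|\varphi\|_\infty$ for some $\kappa\in(0,1)$. Summing the resulting geometric series over $s$ yields
$$
\mathbb{E}_{\theta,m}^N\big[[\eta_\theta^{l,N}-\eta_\theta^{l}](\varphi)^2\big]\le \frac{C\|\varphi\|_\infty^2}{N}
$$
with $C$ independent of $l$, which is the first claim.

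For the bias bound I would carry the same decomposition to second order. The only source of bias is the nonlinearity of the normalisation in $\Phi_\theta^s$, i.e.\ the ratio $\eta_\theta^{s-1,N}(G_\theta^{s-1}\cdot)/\eta_\theta^{s-1,N}(G_\theta^{s-1})$; a second-order Taylor expansion in the fluctuation $\eta_\theta^{s-1,N}-\eta_\theta^{s-1}$ makes the first-order (martingale) part vanish in expectation, while the second-order remainder is $\mathcal{O}(1/N)$ by the variance estimate just obtained. Propagating these remainders with the same geometric stability from (A\ref{hyp:B}) and summing gives $|\mathbb{E}_{\theta,m}^N[[\eta_\theta^{l,N}-\eta_\theta^{l}](\varphi)]|\le C\|\varphi\|_\infty/N$, again uniformly in $l$; the case $l=0$ is immediate since $\eta_\theta^{0,N}$ is an i.i.d.\ empirical measure (zero bias, variance $\le\|\varphi\|_\infty^2/N$). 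I expect the main obstacle to be precisely the time-uniform stability estimate: establishing that the composed maps $\Phi_{s,l}$ contract at a geometric rate independent of $l$ is exactly what makes the constant $l$-free, and it is here that (A\ref{hyp:A})--(A\ref{hyp:B}) are essential. Since this stability theory is classical for Feynman--Kac flows, in practice I would quote it from \cite{delm:04,beskos} rather than reprove it.
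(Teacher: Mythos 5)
Your proposal is correct and takes essentially the same route as the paper: the paper's entire proof consists of citing \cite[Theorem 7.4.4]{delm:04} for the $L_2$ bound and \cite[eq.~(A.2), Lemma A.1(iii)]{beskos} for the bias bound, which is exactly what you conclude by saying you would quote the classical Feynman--Kac stability theory rather than reprove it. Your sketch of what lies behind those citations (telescoping the error through the semigroup $\Phi_{s,l}$, conditional i.i.d.\ sampling at each step, geometric forgetting from the minorisation (A\ref{hyp:B}) with bounded weights (A\ref{hyp:A}), and a second-order expansion showing the bias is $\mathcal{O}(1/N)$) is an accurate account of the standard argument, so the two proofs agree in substance.
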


\begin{proof}
The first statement is \cite[Theorem 7.4.4.]{delm:04} and the second follows easily from (e.g.) \cite[eq.~(A.2.), Lemma A.1.(iii)]{beskos}.
\end{proof}

Recall that we use $\mathbb{E}_{\theta}$ to denote expectation associated to the probability $\mathbb{P}_{\theta}$ in \eqref{eq:alg_2_law}
of which is associated to the generation of Algorithm \ref{alg:est_const}.

\begin{lem}\label{lem:est_pool}
Assume (A\ref{hyp:A}-\ref{hyp:B}). Then for any $\theta\in\Theta$ there exists a $C<+\infty$ such that for any
$(l,p,\varphi)\in\mathbb{N}_0\times\mathbb{N}_0\times\mathcal{B}_b(\mathsf{X})$, $1\leq N_0<N_1<\cdots<N_p<+\infty$:
$$
\mathbb{E}_{\theta}[[\eta_\theta^{l,N_{0:p}}-\eta_\theta^{l}](\varphi)^2] \leq \frac{C\|\varphi\|_{\infty}^2}{N_p}\Big(1+\frac{p^2}{N_p}\Big).
$$
\end{lem}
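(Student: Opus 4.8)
The plan is to exploit the fact that the pooled measure $\eta_\theta^{l,N_{0:p}}$ is a convex combination of empirical measures produced by $p+1$ \emph{independent} runs of Algorithm \ref{alg:mlsmc}, so that a bias--variance decomposition reduces everything to the single-run estimates of Lemma \ref{lem:ml_bounds}. First I would set $M_q := N_q - N_{q-1}$ (with $N_{-1}:=0$) and $w_q := M_q/N_p$, so that by \eqref{eq:es} one has $[\eta_\theta^{l,N_{0:p}}-\eta_\theta^l](\varphi) = \sum_{q=0}^p w_q X_q$, where $X_q := [\eta_\theta^{l,M_q}-\eta_\theta^l](\varphi)$ and $\sum_{q=0}^p w_q = N_p^{-1}\sum_{q=0}^p M_q = 1$. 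Because the runs are independent by construction in Algorithm \ref{alg:est_const}, the random variables $(X_q)_{q=0}^p$ are independent, and the law of each $X_q$ under $\mathbb{P}_\theta$ is its single-run marginal, so $\mathbb{E}_\theta[\,\cdot\,]$ applied to any function of $X_q$ alone coincides with $\mathbb{E}_{\theta,m}^{M_q}[\,\cdot\,]$.

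Next I would decompose the mean-square error as $\mathbb{E}_\theta\big[(\sum_q w_q X_q)^2\big] = \mathrm{Var}_\theta(\sum_q w_q X_q) + \big(\mathbb{E}_\theta[\sum_q w_q X_q]\big)^2$ and treat the two terms separately. For the variance term, independence gives $\mathrm{Var}_\theta(\sum_q w_q X_q) = \sum_q w_q^2\,\mathrm{Var}_\theta(X_q) \leq \sum_q w_q^2\,\mathbb{E}_{\theta,m}^{M_q}[X_q^2]$; applying the first bound of Lemma \ref{lem:ml_bounds}, namely $\mathbb{E}_{\theta,m}^{M_q}[X_q^2]\leq C\|\varphi\|_\infty^2/M_q$, and using $w_q^2/M_q = M_q/N_p^2$, the weighted sum telescopes via $\sum_q M_q = N_p$, giving $\sum_q w_q^2/M_q = N_p^{-2}\sum_q M_q = N_p^{-1}$, so the variance is at most $C\|\varphi\|_\infty^2/N_p$.

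For the squared-bias term, I would apply the second bound of Lemma \ref{lem:ml_bounds}, $|\mathbb{E}_{\theta,m}^{M_q}[X_q]| \leq C\|\varphi\|_\infty/M_q$, to each of the $p+1$ runs. The point is that the per-run biases do not cancel under the convex combination but accumulate: $|\mathbb{E}_\theta[\sum_q w_q X_q]| \leq \sum_q w_q\, C\|\varphi\|_\infty/M_q = \sum_q (M_q/N_p)(C\|\varphi\|_\infty/M_q) = (p+1)\,C\|\varphi\|_\infty/N_p$, whence the squared bias is at most $C^2\|\varphi\|_\infty^2 (p+1)^2/N_p^2$. Combining the two contributions and bounding $(p+1)^2 \leq C(1+p^2)$ then yields $\mathbb{E}_\theta[[\eta_\theta^{l,N_{0:p}}-\eta_\theta^l](\varphi)^2] \leq C\|\varphi\|_\infty^2\,N_p^{-1}(1+p^2/N_p)$ after relabelling the constant.

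This argument is essentially mechanical once Lemma \ref{lem:ml_bounds} is available; the only conceptual point, and the single place where the $p^2/N_p$ correction originates, is the observation that the $\mathcal{O}(1/M_q)$ biases of the individual runs add up linearly across the $p+1$ pooled estimators, in contrast to the variances, whose weights square and telescope against $\sum_q M_q = N_p$. I would therefore regard the careful bookkeeping of these two different scalings, rather than any analytic subtlety, as the main (and rather modest) obstacle.
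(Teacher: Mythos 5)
Your proof is correct and takes essentially the same route as the paper: the paper's own proof simply defers to \cite[Proposition A.1]{ubpf} together with Lemma \ref{lem:ml_bounds}, and the argument there is precisely your bias--variance decomposition of the pooled estimator over the $p+1$ independent runs, with the second-moment bound of Lemma \ref{lem:ml_bounds} giving the telescoping $\mathcal{O}(1/N_p)$ variance and the $\mathcal{O}(1/M_q)$ bias bound giving the linearly accumulating biases responsible for the $p^2/N_p$ correction. Your write-up is a self-contained version of exactly that cited argument, so nothing further is needed.
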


\begin{proof}
Follows by a similar approach to the proof of \cite[Proposition A.1.]{ubpf}, which needs the results in Lemma \ref{lem:ml_bounds}.
\end{proof}

\begin{lem}\label{lem:tech_lem1}
Assume (A\ref{hyp:A}-\ref{hyp:B}). Then for any $\theta\in\Theta$ there exists a $C<+\infty$ such that for any
$(l,p,i)\in\mathbb{N}_0\times\mathbb{N}_0\times\{1,\dots,d_{\theta}\}$, $1\leq N_0<N_1<\cdots<N_p<+\infty$:
$$
\mathbb{E}_{\theta}\Big[\Big(\frac{ \eta_\theta^{l,N_{0:p}}(G_{\theta}^l\{(\varphi_\theta^{l+1})_i-(\varphi_\theta^{l})_i\}) }
{\eta_\theta^{l,N_{0:p}}(G_{\theta}^l)}- 
\frac{ \eta_\theta^{l}(G_{\theta}^l\{(\varphi_\theta^{l+1})_i-(\varphi_\theta^{l})_i\}) }
{\eta_\theta^{l}(G_{\theta}^l)}
\Big)^2\Big] \leq \frac{C\|(\varphi_\theta^{l+1})_i-(\varphi_\theta^{l})_i\|_{\infty}^2}{N_p}\Big(1+\frac{p^2}{N_p}\Big).
$$
\end{lem}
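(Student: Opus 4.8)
The plan is to bound this ratio-estimator mean-squared error by reducing it, via a standard centering trick, to the \emph{non-ratio} estimator error that Lemma \ref{lem:est_pool} already controls. To lighten notation I would write $g = G_\theta^l$, $\phi = (\varphi_\theta^{l+1})_i - (\varphi_\theta^l)_i$, $\psi = g\phi$, and abbreviate $\mu = \eta_\theta^l$, $\mu^N = \eta_\theta^{l,N_{0:p}}$, so that the quantity to be bounded is $\mathbb{E}_{\theta}[(\mu^N(\psi)/\mu^N(g) - \mu(\psi)/\mu(g))^2]$.

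First I would set $r = \mu(\psi)/\mu(g)$ and introduce the centered function $\bar\psi := \psi - r g$, which satisfies $\mu(\bar\psi) = 0$ by construction. A one-line computation then gives the key identity
$$
\frac{\mu^N(\psi)}{\mu^N(g)} - \frac{\mu(\psi)}{\mu(g)} = \frac{\mu^N(\bar\psi)}{\mu^N(g)} = \frac{\mu^N(\bar\psi) - \mu(\bar\psi)}{\mu^N(g)} \, ,
$$
which re-expresses the ratio error as a (randomly) scaled fluctuation of the linear estimator $\mu^N(\bar\psi)$ about its mean $\mu(\bar\psi)$.

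Second, I would invoke the lower bound in (A\ref{hyp:A}), which gives $\mu^N(g) = \eta_\theta^{l,N_{0:p}}(G_\theta^l) \geq \underline{C}$ for every realization, so that $1/\mu^N(g) \leq 1/\underline{C}$ pointwise. Squaring the identity and taking $\mathbb{E}_{\theta}$ therefore yields $\mathbb{E}_{\theta}[(\cdot)^2] \leq \underline{C}^{-2}\,\mathbb{E}_{\theta}[(\mu^N(\bar\psi) - \mu(\bar\psi))^2]$, and I can apply Lemma \ref{lem:est_pool} directly to $\bar\psi \in \mathcal{B}_b(\mathsf{X})$ to bound the right-hand side by $C\underline{C}^{-2}\|\bar\psi\|_\infty^2 N_p^{-1}(1 + p^2/N_p)$.

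Finally I would control $\|\bar\psi\|_\infty$ in terms of $\|\phi\|_\infty$. Writing $\bar\psi = g(\phi - r)$ and noting that $r = \eta_\theta^l(G_\theta^l\phi)/\eta_\theta^l(G_\theta^l)$ is a $G_\theta^l$-weighted average of $\phi$ (legitimate since $G_\theta^l \geq \underline{C} > 0$), hence $|r| \leq \|\phi\|_\infty$, the upper bound $\|g\|_\infty \leq \overline{C}$ from (A\ref{hyp:A}) gives $\|\bar\psi\|_\infty \leq \overline{C}(\|\phi\|_\infty + |r|) \leq 2\overline{C}\|\phi\|_\infty$. Absorbing $4\overline{C}^2/\underline{C}^2$ and the constant of Lemma \ref{lem:est_pool} into a single $C$ completes the bound. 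There is no genuine obstacle here: the centering decomposition is the only nontrivial idea, and it is the standard one for SMC ratio estimators. The single point requiring care is that every constant remain independent of $l$, $p$ and $i$, which is automatic because (A\ref{hyp:A}) is uniform in $l$ and Lemma \ref{lem:est_pool} holds for all $\varphi$ with one constant.
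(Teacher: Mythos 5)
Your proof is correct and takes essentially the same route as the paper's: the paper expands the ratio error into fluctuations $[\eta_\theta^{l,N_{0:p}}-\eta_\theta^{l}](\cdot)$ of the bounded functions $G_\theta^l\{(\varphi_\theta^{l+1})_i-(\varphi_\theta^{l})_i\}$ and $G_\theta^l$, bounds the random denominator below using the lower bound on $G_\theta^l$ (the paper's citation of (A\ref{hyp:B}) at that point is evidently a typo for (A\ref{hyp:A}), which is what you correctly invoke), and applies the $C_2$-inequality together with Lemma \ref{lem:est_pool} to each term. Your centered function $\bar\psi=\psi-rg$ merely packages the paper's two terms into one --- expanding $[\eta_\theta^{l,N_{0:p}}-\eta_\theta^{l}](\bar\psi)$ by linearity recovers the paper's decomposition exactly --- so you reach the same bound with a single application of Lemma \ref{lem:est_pool} and no $C_2$-inequality.
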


\begin{proof}
As
$$
\frac{ \eta_\theta^{l,N_{0:p}}(G_{\theta}^l\{(\varphi_\theta^{l+1})_i-(\varphi_\theta^{l})_i\}) }
{\eta_\theta^{l,N_{0:p}}(G_{\theta}^l)}- 
\frac{ \eta_\theta^{l}(G_{\theta}^l\{(\varphi_\theta^{l+1})_i-(\varphi_\theta^{l})_i\}) }
{\eta_\theta^{l}(G_{\theta}^l)}
= 
$$
$$
\frac{1}{\eta_\theta^{l,N_{0:p}}(G_{\theta}^l)}\Big(
[\eta_\theta^{l,N_{0:p}}-\eta_\theta^{l}](G_{\theta}^l\{(\varphi_\theta^{l+1})_i-(\varphi_\theta^{l})_i\})  
\Big) + \frac{\eta_\theta^{l}(G_{\theta}^l\{(\varphi_\theta^{l+1})_i-(\varphi_\theta^{l})_i\})}
{\eta_\theta^{l,N_{0:p}}(G_{\theta}^l)\eta_\theta^{l}(G_{\theta}^l)}[\eta_\theta^{l}-\eta_\theta^{l,N_{0:p}}](G_{\theta}^l)
$$
one can simply use the $C_2$-inequality, (A\ref{hyp:B}) and Lemma \ref{lem:est_pool}  to complete the proof.
\end{proof}

\begin{lem}\label{lem:tech_lem2}
Assume (A\ref{hyp:A}-\ref{hyp:C}). Then for any $\theta\in\Theta$ there exists a $C<+\infty$ such that for any
$(l,p,i)\in\mathbb{N}\times\mathbb{N}_0\times\{1,\dots,d_{\theta}\}$, $1\leq N_0<N_1<\cdots<N_p<+\infty$:
$$
\mathbb{E}_{\theta}\Bigg[\Bigg(\frac{ \eta_\theta^{l-1,N_{0:p}}(G_{\theta}^{l-1}(\varphi_\theta^{l})_i) }
{\eta_\theta^{l-1,N_{0:p}}(G_{\theta}^{l-1})}- \eta_\theta^{l-1,N_{0:p}}((\varphi_\theta^{l-1})_i) -
\Big(
\frac{ \eta_\theta^{l-1}(G_{\theta}^{l-1}(\varphi_\theta^{l})_i) }
{\eta_\theta^{l-1}(G_{\theta}^{l-1})}- \eta_\theta^{l-1}((\varphi_\theta^{l-1})_i)
\Big)
\Bigg)^2\Bigg] \leq 
$$
$$
\frac{C}{N_p}\Big(1+\frac{p^2}{N_p}\Big)\Big(\|(\varphi_\theta^{l})_i-(\varphi_\theta^{l-1})_i\|_{\infty}^2+\Big\|G_{\theta}^{l-1}\frac{Z_\theta^{l-1}}{Z_\theta^{l}}-1\Big\|_{\infty}^2\Big).
$$
\end{lem}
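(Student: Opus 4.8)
The plan is to reduce the estimate to the two results already in hand, Lemma \ref{lem:tech_lem1} and Lemma \ref{lem:est_pool}, by separating the contribution of the increment $(\varphi_\theta^{l})_i-(\varphi_\theta^{l-1})_i$ from that of $G_\theta^{l-1}Z_\theta^{l-1}/Z_\theta^{l}-1$. Throughout I would abbreviate $\hat\eta:=\eta_\theta^{l-1,N_{0:p}}$, $\eta:=\eta_\theta^{l-1}$, $G:=G_\theta^{l-1}$, $w:=(\varphi_\theta^{l-1})_i$ and $\delta:=(\varphi_\theta^{l})_i-(\varphi_\theta^{l-1})_i$, so that $(\varphi_\theta^{l})_i=w+\delta$. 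Substituting this splitting into the two ratio terms and regrouping, the quantity inside the expectation equals $T_A+T_B$ with
$$
T_A=\frac{\hat\eta(G\delta)}{\hat\eta(G)}-\frac{\eta(G\delta)}{\eta(G)},\qquad
T_B=\Big(\frac{\hat\eta(Gw)}{\hat\eta(G)}-\hat\eta(w)\Big)-\Big(\frac{\eta(Gw)}{\eta(G)}-\eta(w)\Big).
$$
By the $C_2$-inequality it then suffices to bound $\mathbb{E}_\theta[T_A^2]$ and $\mathbb{E}_\theta[T_B^2]$ individually.

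The term $T_A$ is precisely the left-hand side of Lemma \ref{lem:tech_lem1} with the level index shifted from $l$ to $l-1$ and the increment $(\varphi_\theta^{l})_i-(\varphi_\theta^{l-1})_i$ playing the role of $(\varphi_\theta^{l+1})_i-(\varphi_\theta^{l})_i$; since $l-1\in\mathbb{N}_0$, that lemma applies directly and yields $\mathbb{E}_\theta[T_A^2]\leq \frac{C}{N_p}(1+p^2/N_p)\|(\varphi_\theta^{l})_i-(\varphi_\theta^{l-1})_i\|_\infty^2$, which is one of the two terms on the right-hand side.

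The main obstacle is $T_B$, since a crude bound treating the two ratio-estimator errors separately would only produce a constant proportional to $\|w\|_\infty^2$, carrying no decay in $l$. The point is to exhibit the cancellation that makes $T_B$ controlled by $\|G Z_\theta^{l-1}/Z_\theta^{l}-1\|_\infty$. To this end I would set $H:=G/\eta(G)=G\,Z_\theta^{l-1}/Z_\theta^{l}$, using the identity recorded after Proposition \ref{prop:verify}, and $\phi:=H-1$, so that $\eta(\phi)=0$ and $\|\phi\|_\infty=\|G Z_\theta^{l-1}/Z_\theta^{l}-1\|_\infty$. Dividing numerator and denominator by $\eta(G)$ turns the empirical ratio $\hat\eta(Gw)/\hat\eta(G)$ into $\hat\eta(Hw)/\hat\eta(H)$ and the true ratio $\eta(Gw)/\eta(G)$ into $\eta(Hw)=\eta(w)+\eta(\phi w)$; since $\hat\eta(1)=\eta(1)=1$, a short algebraic reduction gives
$$
T_B=\frac{\hat\eta(\phi w)-\hat\eta(w)\hat\eta(\phi)}{\hat\eta(H)}-\eta(\phi w).
$$
Writing $\hat\eta(\phi w)=\eta(\phi w)+[\hat\eta-\eta](\phi w)$ and $\hat\eta(\phi)=[\hat\eta-\eta](\phi)$ (both using $\eta(\phi)=0$), together with $1/\hat\eta(H)-1=-[\hat\eta-\eta](\phi)/\hat\eta(H)$, one rewrites $T_B=\hat\eta(H)^{-1}\{-[\hat\eta-\eta](\phi)\,\eta(\phi w)+[\hat\eta-\eta](\phi w)-\hat\eta(w)[\hat\eta-\eta](\phi)\}$, in which every term now carries a factor $\phi$ together with at least one centred empirical average $[\hat\eta-\eta](\cdot)$.

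To close the argument I would record the uniform bounds that tame the remaining factors. Under (A\ref{hyp:A}) one has $G\in[\underline{C},\overline{C}]$, hence $\eta(G)\in[\underline{C},\overline{C}]$, $H\in[\underline{C}/\overline{C},\overline{C}/\underline{C}]$, so $\|\phi\|_\infty$ is bounded by a constant and $\hat\eta(H)\geq\underline{C}/\overline{C}>0$ uniformly, giving $|\hat\eta(H)^{-1}|\leq\overline{C}/\underline{C}$; under (A\ref{hyp:C}) one has $\|w\|_\infty\leq\widetilde{C}$, whence $|\hat\eta(w)|\leq\widetilde{C}$ and $|\eta(\phi w)|\leq\widetilde{C}\|\phi\|_\infty$. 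Applying Lemma \ref{lem:est_pool} to the test functions $\phi$ and $\phi w$ bounds $\mathbb{E}_\theta[([\hat\eta-\eta](\phi))^2]$ and $\mathbb{E}_\theta[([\hat\eta-\eta](\phi w))^2]$ each by $\frac{C}{N_p}(1+p^2/N_p)\|\phi\|_\infty^2$, using $\|\phi w\|_\infty\leq\widetilde{C}\|\phi\|_\infty$. Combining these three contributions with the $C_2$-inequality, and using the boundedness of $\|\phi\|_\infty$ to absorb the extra $\|\phi\|_\infty^2$ appearing in the cross term, yields $\mathbb{E}_\theta[T_B^2]\leq \frac{C}{N_p}(1+p^2/N_p)\|G Z_\theta^{l-1}/Z_\theta^{l}-1\|_\infty^2$. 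Adding the bounds for $T_A$ and $T_B$ gives exactly the asserted inequality.
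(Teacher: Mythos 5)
Your proof is correct and follows essentially the same route as the paper's: the paper splits the error in exactly the same way, into the increment term (your $T_A$, which is its $T_1$, bounded by Lemma \ref{lem:tech_lem1} at level $l-1$) plus the remainder (your $T_B$, which equals its $T_2+T_3$), and controls the latter with Lemma \ref{lem:est_pool}, the uniform bounds from (A\ref{hyp:A}) and (A\ref{hyp:C}), and repeated $C_2$-inequalities. The only difference is organizational: the paper reaches the analogous three centred-empirical-average terms by nested splittings ($T_2+T_3$, then $T_2=T_4+T_5$) in the un-normalized ratio form, whereas you first normalize $G$ to $H=G/\eta(G)$ and exploit $\eta(\phi)=0$ to produce them in a single identity.
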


\begin{proof}
We have the decomposition
$$
\frac{ \eta_\theta^{l-1,N_{0:p}}(G_{\theta}^{l-1}(\varphi_\theta^{l})_i) }
{\eta_\theta^{l-1,N_{0:p}}(G_{\theta}^{l-1})}- \eta_\theta^{l-1,N_{0:p}}((\varphi_\theta^{l-1})_i) -
\Big(
\frac{ \eta_\theta^{l-1}(G_{\theta}^{l-1}(\varphi_\theta^{l})_i) }
{\eta_\theta^{l-1}(G_{\theta}^{l-1})}- \eta_\theta^{l-1}((\varphi_\theta^{l-1})_i)
\Big) = \sum_{j=1}^3 T_j
$$
where
\begin{eqnarray*}
T_1 & = & \frac{ \eta_\theta^{l-1,N_{0:p}}(G_{\theta}^{l-1}\{(\varphi_\theta^{l})_i-(\varphi_\theta^{l-1})_i\}) }
{\eta_\theta^{l-1,N_{0:p}}(G_{\theta}^{l-1})} - 
\frac{ \eta_\theta^{l-1}(G_{\theta}^{l-1}\{(\varphi_\theta^{l})_i-(\varphi_\theta^{l-1})_i\}) }
{\eta_\theta^{l-1}(G_{\theta}^{l-1})} \\
T_2 & = & -\frac{\eta_\theta^{l-1,N_{0:p}}(G_{\theta}^{l-1}(\varphi_\theta^{l-1})_i)}{\eta_\theta^{l-1,N_{0:p}}(G_{\theta}^{l-1})}
\eta_\theta^{l-1,N_{0:p}}\Big(G_{\theta}^{l-1}\frac{Z_\theta^{l-1}}{Z_\theta^{l}}-1\Big) +
\frac{\eta_\theta^{l-1}(G_{\theta}^{l-1}(\varphi_\theta^{l-1})_i)}{\eta_\theta^{l-1}(G_{\theta}^{l-1})}
\eta_\theta^{l-1}\Big(G_{\theta}^{l-1}\frac{Z_\theta^{l-1}}{Z_\theta^{l}}-1\Big)\\
T_3 & = & \eta_\theta^{l-1,N_{0:p}}\Big((\varphi_\theta^{l-1})_i\Big(G_{\theta}^{l-1}\frac{Z_\theta^{l-1}}{Z_\theta^{l}}-1\Big)\Big) - 
\eta_\theta^{l-1}\Big((\varphi_\theta^{l-1})_i\Big(G_{\theta}^{l-1}\frac{Z_\theta^{l-1}}{Z_\theta^{l}}-1\Big)\Big).
\end{eqnarray*}
Thus, one can apply the $C_2-$inequality twice and deal individually with the terms $\sum_{j=1}^3 \mathbb{E}_{\theta}[T_j^2]$.
For $\mathbb{E}_{\theta}[T_1^2]$ one can use Lemma \ref{lem:tech_lem1}. For $\mathbb{E}_{\theta}[T_3^2]$ one can use Lemma \ref{lem:est_pool}.
So to conclude, we consider $\mathbb{E}_{\theta}[T_2^2]$. We have
$$
T_2  =  T_4 + T_5
$$
where
\begin{eqnarray*}
T_4 & = &  
-\Big(\frac{\eta_\theta^{l-1,N_{0:p}}(G_{\theta}^{l-1}(\varphi_\theta^{l-1})_i)}{\eta_\theta^{l-1,N_{0:p}}(G_{\theta}^{l-1})} - 
\frac{\eta_\theta^{l-1}(G_{\theta}^{l-1}(\varphi_\theta^{l-1})_i)}{\eta_\theta^{l-1}(G_{\theta}^{l-1})}\Big)\eta_\theta^{l-1,N_{0:p}}\Big(G_{\theta}^{l-1}\frac{Z_\theta^{l-1}}{Z_\theta^{l}}-1\Big)
\\
T_5 & = &  \frac{\eta_\theta^{l-1}(G_{\theta}^{l-1}(\varphi_\theta^{l-1})_i)}{\eta_\theta^{l-1}(G_{\theta}^{l-1})}\Big(
\eta_\theta^{l-1}\Big(G_{\theta}^{l-1}\frac{Z_\theta^{l-1}}{Z_\theta^{l}}-1\Big) - 
\eta_\theta^{l-1,N_{0:p}}\Big(G_{\theta}^{l-1}\frac{Z_\theta^{l-1}}{Z_\theta^{l}}-1\Big)
\Big).
\end{eqnarray*}
Applying the $C_2-$inequality once again allows one to consider just $\mathbb{E}_{\theta}[T_4^2]$ and $\mathbb{E}_{\theta}[T_5^2]$ individually.
For $\mathbb{E}_{\theta}[T_5^2]$ one can use (A\ref{hyp:C}) and Lemma \ref{lem:est_pool}. As
$$
T_4 = -
\Big(\frac{[\eta_\theta^{l-1,N_{0:p}}-\eta_\theta^{l-1}](G_{\theta}^{l-1}(\varphi_\theta^{l-1})_i)}{\eta_\theta^{l-1,N_{0:p}}(G_{\theta}^{l-1})} - 
\frac{\eta_\theta^{l-1}(G_{\theta}^{l-1}(\varphi_\theta^{l-1})_i)}{\eta_\theta^{l-1}(G_{\theta}^{l-1})\eta_\theta^{l-1,N_{0:p}}(G_{\theta}^{l-1})}
[\eta_\theta^{l-1}-\eta_\theta^{l-1,N_{0:p}}](G_{\theta}^{l-1})
\Big)\eta_\theta^{l-1,N_{0:p}}\Big(G_{\theta}^{l-1}\frac{Z_\theta^{l-1}}{Z_\theta^{l}}-1\Big).
$$
One can then conclude the result by applying the $C_2-$inequality and using (A\ref{hyp:B}) and Lemma \ref{lem:est_pool}.
\end{proof}

\end{document}